\newcommand{\Z}{\mathbb{Z}}
\newcommand{\ZqZ}{\Z/q\Z}
\newcommand{\eqdef}{\stackrel{\text{def}}{=}}
\newcommand{\Ceil}[1]{\left\lceil #1 \right\rceil}
\newcommand{\Ceiling}[1]{\Ceil{#1}}
\newcommand{\F}{\mathbb{F}_2}
\newcommand{\Fq}{\mathbb{F}_q}
\newcommand{\Fqm}{\mathbb{F}_{q^m}}
\newcommand{\CG}[2]{\begin{bmatrix}#1 \\ #2\end{bmatrix}_q}
\newcommand{\cg}[2]{\CG{#2}{#1}}
\newcommand{\C}{{\mathcal{C}}}
\newcommand{\rand}{\stackrel{\$}{\leftarrow}}
\newcommand{\word}[1]{\ensuremath{\boldsymbol{#1}}}
\newcommand{\Av}{\word{A}}
\newcommand{\Bv}{\word{B}}
\newcommand{\Cv}{\word{C}}
\newcommand{\Gv}{\word{G}}
\newcommand{\Hv}{\word{H}}
\newcommand{\Iv}{\word{I}}
\newcommand{\Mv}{\word{M}}
\newcommand{\Sv}{\word{S}}
\newcommand{\Tv}{\word{T}}
\newcommand{\Uv}{\word{U}}
\newcommand{\Vv}{\word{V}}
\newcommand{\Xv}{\word{X}}
\newcommand{\Yv}{\word{Y}}
\newcommand{\cv}{\word{c}}
\newcommand{\ev}{\word{e}}
\newcommand{\gv}{\word{g}}
\newcommand{\hv}{\word{h}}
\newcommand{\sv}{\word{s}}
\newcommand{\uv}{\word{u}}
\newcommand{\vv}{\word{v}}
\newcommand{\wv}{\word{w}}
\newcommand{\xv}{\word{x}}
\newcommand{\yv}{\word{y}}
\newcommand{\sigmav}{\word{\sigma}}
\DeclareMathOperator{\Rank}{Rank}
\DeclareMathOperator{\rank}{Rank}
\DeclareMathOperator{\Supp}{Supp}
\DeclareMathOperator{\Prob}{\mathbb P}
\newcommand{\norme}[1]{\| #1 \|}
\newcommand{\RSD}{\textsf{RSD}\xspace}
\newcommand{\DRSD}{\textsf{DRSD}\xspace}
\newcommand{\IRSD}{\textsf{IRSD}\xspace}
\newcommand{\DIRSD}{\textsf{DIRSD}\xspace}
\newcommand{\RSR}{\textsf{RSR}\xspace}
\newcommand{\xRSR}{\textsf{xRSR}\xspace}
\newcommand{\RSL}{\textsf{RSL}\xspace}
\newcommand{\DRSL}{\textsf{DRSL}\xspace}
\newcommand{\DIRSL}{\textsf{DIRSL}\xspace}
\newcommand{\IRSL}{\textsf{IRSL}\xspace}
\newcommand{\LRPC}{\textsf{LRPC}\xspace}
\newcommand{\ILRPC}{\textsf{ILRPC}\xspace}
\newcommand{\Adv}{\ensuremath{\mathsf{Adv}}}
\newcommand{\mA}{\ensuremath{\mathcal{A}}}
\newcommand{\mapolicebackref}[1]{\mbox{\textsl{\small #1}}}
\renewcommand*{\backref}[1]{}
\renewcommand*{\backrefalt}[4]{%
\ifcase #1 \mapolicebackref{Uncited in this paper}
    \or \mapolicebackref{#2}
    \else \mapolicebackref{#2}
\fi
}
\newcommand{\KEM}{\mathsf{KEM}}
\newcommand{\KeyGen}{\normalfont\textsf{KeyGen}}
\newcommand{\INDCPA}{\mathsf{IND\mbox{-}CPA}}
\newcommand{\Encap}{\normalfont\textsf{Encap}}
\newcommand{\Decap}{\normalfont\textsf{Decap}}
\newcommand{\INDCPArand}{\INDCPA_{\mathsf{rand}}}
\newcommand{\INDCPAreal}{\INDCPA_{\mathsf{real}}}
\newcommand{\advA}{\mathcal{A}} 
\newcommand{\AdvINDCPA}[2]{\Adv^{\normalfont{indcpa}}_{#1}(#2)}
\newcommand{\Exp}{\mathbf{Exp}}
\newcommand{\param}{\ensuremath{\mathsf{param}}}
\newcommand{\E}{\ensuremath{\mathcal{E}}\xspace}
\newcommand{\sk}{\ensuremath{\mathsf{sk}}\xspace}
\newcommand{\pk}{\ensuremath{\mathsf{pk}}\xspace}
\newcommand{\ct}{\ensuremath{\mathsf{ct}}\xspace}
\newcommand{\A}{\ensuremath{\mathcal{A}}\xspace}
\newcommand{\ind}{\normalfont\textsf{ind}}
\newcommand{\seck}{{\lambda}}
\newcommand{\sets}{\leftarrow}
\newcommand{\Setup}{\ensuremath{\mathsf{Setup}}}
\newcommand{\comreturn}{\texttt{RETURN\ }}
\newcommand{\GUESS}{\texttt{GUESS}}
\newcommand{\Expec}{\mathbb E}
\newcommand{\vect}[1]{\langle #1 \rangle}
\newcommand{\EF}{EF}
\title{LRPC codes with multiple syndromes: near ideal-size KEMs without ideals}
\author{
Carlos Aguilar-Melchor\inst{1} \and
Nicolas Aragon\inst{2} \and
Victor Dyseryn \inst{3} \and
Philippe Gaborit \inst{3} \and
Gilles Zémor \inst{4}}
\institute{
Sandbox AQ. \and
CNRS, Inria, IRISA, Université de Rennes, France. \and
XLIM, Université de Limoges, France. \and
Institut de Mathématiques de Bordeaux, UMR 5251, France.}
\date{}
\begin{document}
\maketitle

\begin{abstract}
We introduce a new rank-based key encapsulation mechanism (KEM) with public key and ciphertext sizes around 3.5 Kbytes each, for 128 bits of security, without using ideal structures. Such structures allow to compress objects, but give reductions to specific problems
whose security is potentially weaker than for unstructured problems.
To the best of our knowledge, our scheme improves in size all the existing unstructured post-quantum lattice or code-based algorithms such as
FrodoKEM or Classic McEliece. Our technique, whose efficiency relies on properties of rank metric, is to build upon existing
Low Rank Parity Check (LRPC) code-based KEMs and to send multiple syndromes in one ciphertext, allowing to reduce the parameters
and still obtain an acceptable decoding failure rate. 
Our system relies on the hardness of the Rank Support Learning problem, a well-known variant of the Rank Syndrome Decoding problem.
The gain on parameters is enough to significantly close the gap between ideal and non-ideal constructions. It enables to choose an error weight close to the rank Gilbert-Varshamov bound, which is a relatively harder zone for algebraic attacks.
We also give a version of our KEM that keeps an ideal structure and permits to roughly divide the bandwidth by two compared to previous versions of LRPC KEMs submitted to the NIST with a Decoding Failure Rate (DFR) of $2^{-128}$.
\end{abstract}

\begin{keywords}
	Rank-based cryptography,
	code-based cryptography, 
	post-quantum cryptography,
	rank support learning,
	LRPC codes
\end{keywords}

\section{Introduction and previous work}
In recent years and especially since the 2017 NIST call for proposals on post-quantum cryptography, there has been a burst of activity in this field. Recent publications, such as the Barbulescu \textit{et al.} attack against the small characteristic discrete logarithm problem~\cite{BGJT14}, stress the importance of having a wide diversity of cryptographic systems before the emergence of large and fault-tolerant quantum computers.

The most common algorithms in post-quantum cryptography are lattice-based or code-based. Code-based cryptography relies on difficult
problems related to error-correcting codes embedded in Hamming metric spaces (often over small fields $\Fq$). Lattice-based cryptography is mainly based on the study of $q$-ary lattices,
which can be seen as codes over rings of type $\ZqZ$ (for large $q$), embedded in Euclidean metric spaces.

In this paper we study a rank-based cryptosystem. Rank-based cryptography is similar to code-based cryptography, with the difference that the error-correcting codes are embedded in a rank-metric space (often over a prime order field extension).

In rank metric, the practical difficulty of usual decoding problems grows very quickly with parameter size, which makes it very appealing for cryptography. This metric was introduced by Delsarte and Gabidulin~\cite{G85}, along with Gabidulin codes which are a rank-metric equivalent of Reed-Solomon codes.
Since then, rank-metric codes have been used for multiple applications such as coding theory and cryptography.

Among the different cryptographic primitives, rank-based cryptography literature is mainly focused around encryption schemes, even if rank metric is relevant to produce small size and general purpose digital signatures, such as Durandal~\cite{aragon2019durandal}. 
Until recently, the main approach to build cryptosystems based on rank-metric decoding problems was masking Gabidulin codes \cite{GPT91} in different ways and using the McEliece (or Niederreiter) setting with these codes.
Most cryptosystems based on this idea were broken by attacks which exploit the particular structure of Gabidulin codes (\cite{O08}, \cite{FL05}, \cite{BL04}, \cite{L11},\cite{G08}). 
A similar situation exists in the Hamming case for which most cryptosystems based on Reed-Solomon codes have been broken for a similar reason: these codes are so structured that they are difficult to mask
and there is always some structural information leak. 

To solve this difficulty, rank-based cryptosystems designers have either produced schemes without masking \cite{AABBBDGZ17} or proposed to use LRPC codes which are easier to mask. The latter are the foundations of the new cryptosystem presented in this paper.

LRPC codes were first introduced in \cite{gaborit2013low} and are a family of rank-metric error-correcting codes which admit a parity matrix whose coordinates generate a low rank vector space. They have a strong decoding power, and can be seen as the rank-metric equivalent to Hamming-metric MDPC codes, which are for example featured in third-round NIST candidate BIKE \cite{BIKE}.

In their quasi-cyclic form, LRPC codes are the main building block of the second-round NIST candidate ROLLO~\cite{ABDGHRTZABBBO19}.
This candidate was not selected for the third round due to algebraic attacks \cite{bardet2020algebraic}, \cite{bardet2020improvements}, which
significantly reduced the security of the parameters proposed in the original submissions. NIST encouraged further study on rank-metric cryptosystems \cite{alagic2020status}, but these attacks have not been improved. The NIST standardization process has improved the scientific community understanding of LRPC cryptosystems and of the associated attacks, meanwhile there were two points which still could be improved.

A first point was related to constant time implementations which were unsatisfactory and for which a recent paper \cite{chou2022constant} showed how it was possible to drastically improve their performances. A second point was the Decoding Failure Rate (DFR). Indeed, LRPC parameters led to quite efficient cryptosystems for DFRs around $2^{-30}$, but for DFRs below $2^{-128}$ there was a significant efficiency drop, as to obtain such DFRs the codes needed to be quite long.
The present paper shows how to avoid larger code lengths while still obtaining very low DFRs. This results in a significant improvement of the associated cryptosystems, both for the structured and unstructured case, without compromising a precise analysis of the DFR. 

A usual technique to reduce public key and ciphertext sizes in cryptosystems is to introduce structure in the underlying algebraic objects. This is done in general by introducing some extra ideal, module, or ring structure \cite{BIKE}, \cite{AABBBDGZ17}, \cite{HPS98}, \cite{Kyber}. However, adding structure comes at the cost of losing reductions to difficult problems in the more general form. A hypothesis must be made: that the structure does not set the stage for better attacks than for the unstructured, general, problem. 

When compared to structured finalist and alternate candidates to the NIST PQC standardization, using the standard communication metric (public key + ciphertext size), our cyclic scheme is more efficient than BIKE~\cite{BIKE} and HQC~\cite{AABBBDGPZ21a} (at least 1.4 times shorter for 128 bits of security), but somewhat less than structured lattice approaches (roughly 1.4 times larger for 128 bits of security) and significantly less than SIKE.

Using again the communication metric, our schemes perform very well in the unstructured setting. Among the finalist and alternate candidates, only two candidates do not use any ideal-like structure: FrodoKEM~\cite{FrodoKEM} for lattice-based cryptography and Classic McEliece~\cite{ClassicMcEliece} for code-based cryptography. A proposal of an unstructured code-based KEM was also introduced after the beginning of the NIST project, called Loong.CCAKEM~\cite{wang2019loong}. Our non-cyclic proposal compares advantageously to the three of them (2.8 times shorter than FrodoKEM for 128 bits of security).\\

\noindent{\bf Description of our technique.}
The usual approach to build a cryptosystem based on LRPC codes is to send a syndrome $\sv = \Hv\ev$ as a ciphertext where both $\ev$ and $\Hv$ are of low rank weight and, to decrypt, use a Rank Support Recovery algorithm to recover the support of $\ev$ from the support of the product $\Hv\ev$. The main obstacle to reduce the parameters for such cryptosystems is not the threat of cryptanalytic attacks but the DFR. In order for the Rank Support Recovery algorithm to work, the syndrome $\sv$ has to be large enough so as to generate the full product of supports of $\Hv$ and $\ev$.

As~\cite{GHPT17a_sv} and~\cite{wang2019loong} did, we use multiple syndromes $\sv_1,\dots,\sv_{\ell}$ of same error support. Our main result is to show that this approach, in the context of LRPC codes, can solve decryption failure rate issues that affected previous schemes and thus reduce significantly key and ciphertext size, as the Rank Support Recovery algorithm gets more coordinates to recover the product of supports. While intuitive at a first glance, the proof that multiple syndromes reduce the probability of failure is quite technical and led us to formulate a general result on the product of two random homogeneous matrices.

Sending multiple syndromes leads naturally to reducing the security of our KEM to the Rank Support Learning ($\RSL$) problem \cite{GHPT17a_sv,wang2019loong,aragon2019durandal}, which implies that our approach is specific to rank-metric and cannot be used for Hamming-based cryptosystems. Indeed, in a Hamming metric context, the complexity of the Support Learning problem decreases way faster with the number of given syndromes than its rank-metric counterpart, and thus---with a direct application of our approach---it is not possible in Hamming metric to obtain parameter sets that have a practical interest and are secure.\\ 

\noindent{\bf Contributions of the paper.}
We present in this paper five contributions:
\begin{itemize}
	\item A new LRPC code-based key encapsulation mechanism built upon the multiple syndrome approach that significantly improves decoding. We give an unstructured version of our KEM that achieves a competitive size of around 3.5 Kbytes each for the public key and ciphertext. We also give an ideal version to reduce the sizes even further.
	\item A proof that with our new approach, small weight parameters $r$ and $d$ of the LRPC code can be chosen higher and even very close to the rank Gilbert-Varshamov bound $d_{RGV} = \mathcal O(n)$, whereas for a LRPC code-based cryptosystem these values have to be in $\mathcal O(\sqrt n)$. When target weights increase, algebraic attacks become less effective and can even be more costly than combinatorial attacks.
	\item A probabilistic result on the support generated by the coordinates of a product matrix $\Uv\Vv$ where $\Uv$ and $\Vv$ are two random homogeneous matrices of low weight. This result happens to be the cornerstone of the efficiency of our KEM but is also general enough to be applicable elsewhere in cryptography or in other fields.
	\item A solution to reduce the failure rate of the Rank Support Recovery algorithm when the dimension $m$ of the ambient space is low. This allows us to suggest a slightly modified version of our KEM that achieves even lower sizes for public key and ciphertext.
	\item An application of the multiple-syndrome approach to another existing rank-metric KEM: Ouroboros \cite{DGZ17}.
\end{itemize}
 
\medskip

\noindent{\bf Organization of the paper.}
The paper is organized as follows: Section 2 recalls basic facts about the rank-metric and the corresponding difficult problems,
Section 3 gives a background on LRPC codes and their decoding,
Section 4 introduces a new KEM using the multiple-syndrome technique to decode LRPC codes,
Section 5 is dedicated to proving a probabilistic result on the support of a product of homogeneous matrices, necessary to prove the efficiency of the new KEM,
Section 6 proves the IND-CPA property of the KEM,
Section 7 is concerned with parameters and performances of our KEMs and
finally Section 8 generalizes the multiple-syndrome approach to Ouroboros. 

\section{Background on Rank Metric Codes}
\label{sec:RankMetric}
\subsection{General definitions}


Let $\Fq$ denote the finite field of $q$ elements where $q$ is the power of a prime and let $\Fqm$ denote the field of $q^m$ elements seen as the extension of degree $m$ of $\Fq$.

$\Fqm$ is also an $\Fq$ vector space of dimension $m$, we denote by capital letters the $\Fq$-subspaces of $\Fqm$ and by lower-case letters the elements of $\Fqm$.


We denote by $\vect{x_1,\dots, x_n}$ the $\Fq$-subspace generated by the elements $(x_1,\dots,x_n) \in \Fqm^n$.

Vectors are denoted by bold lower-case letters and matrices by bold capital letters (eg $\xv = (x_1,\dots, x_n) \in \Fqm^n$ and $\Mv = (m_{ij})_{\substack{1\leqslant i \leqslant k\\1\leqslant j \leqslant n}} \in \Fqm^{k\times n}$).

Let $P \in \Fq[X]$ be a polynomial of degree $n$. We can identify the vector space $\Fqm^n$ with the ring $\Fqm[X]/\langle P \rangle$, by mapping $\vv = (v_0,\dots, v_{n-1})$ to $\Psi(\vv) = \sum_{i=0}^{n-1} v_i X^i$. For $\uv, \vv \in \Fqm^n$, we define their product similarly as in $\Fqm[X]/\langle P \rangle$: $\wv = \uv\vv \in \Fqm^n$ is the only vector such that $\Psi(\wv) = \Psi(\uv)\Psi(\vv) \mod P$. In order to lighten the formula, we will omit the symbol $\Psi$ in the future.

If $S$ is a finite set, we denote by $x\rand S$ the fact that $x$ is chosen uniformly at random amongst $S$.

The number of $\Fq$-subspaces of dimension $r$ of $\Fqm$ is given by the Gaussian coefficient 
\[
\cg{r}{m} = \prod_{i=0}^{r-1} \frac{q^m-q^i}{q^r-q^i}.
\]

\begin{definition}[Rank metric over $\Fqm^n$] \label{def:RankMetric}
Let $\xv=(x_1,\dots,x_n) \in \Fqm^n$ and let $(b_1,\dots ,b_m) \in \Fqm^m$ be a basis of $\Fqm$ over $\Fq$.
Each coordinate $x_j$ is associated to a vector of $\Fq^m$ in this basis: $x_j = \sum_{i=1}^m m_{ij} b_i$. The $m \times n$ matrix
associated to $\xv$ is given by $\Mv(\xv)=(m_{ij})_{\substack{1
    \leqslant i \leqslant m \\ 1 \leqslant j \leqslant n}}$.

The rank weight $\norme{\xv}$ of $\xv$ is defined as 
\[
\norme{\xv} \eqdef \Rank \Mv(\xv).
\]
This definition does not depend on the choice of the basis.
The associated distance $d(\xv,\yv)$ between elements $\xv$ and $\yv$ in $\Fqm^n$ is defined by 
$d(\xv,\yv)=\norme{\xv-\yv}$.

The support of $\xv$, denoted $\Supp(\xv)$, is the $\Fq$-subspace of $\Fqm$ generated by the coordinates of $\xv$:
\[
\Supp(\xv) \eqdef \langle x_1, \dots, x_n\rangle
\]
and we have $\dim \Supp(\xv) = \norme{\xv}$.
\end{definition}

\begin{definition}[$\Fqm$-linear code]\label{def:FqmLinearCode}
An $\Fqm$-linear code $\C$ of dimension $k$ and length $n$ is a
subspace of dimension $k$ of $\Fqm^n$ seen as a rank metric space. The
notation $[n,k]_{q^m}$ is used to denote its parameters.

The code $\C$ can be represented by two equivalent ways:
\begin{itemize}
\item by a generator matrix $\Gv \in \Fqm^{k\times n}$. Each row of $\Gv$ is an element of a basis of $\C$,
\[
\C = \{\xv\Gv, \xv \in \Fqm^k \}.
\]
\item by a parity-check matrix $\Hv \in \Fqm^{(n-k)\times n}$. Each row of $\Hv$ determines a parity-check equation verified by the elements of $\C$:
\[
\C = \{\xv \in \Fqm^n : \Hv\xv^T = \boldsymbol{0} \}.
\]
\end{itemize}
We say that $\Gv$ (respectively $\Hv$) is under systematic form if and only if it is of the form $(\Iv_k|\Av)$ (respectively $(\Iv_{n-k}|\Bv)$).
\end{definition}


\subsection{Ideal codes}
To describe an $[n,k]_{q^m}$ linear code, we can give a systematic generator matrix or a systematic parity-check matrix. In both cases, the number of bits needed to represent such a matrix is $k(n-k)m\Ceil{\log_2 q}$.  To reduce the size of a representation of a code, we introduce ideal codes. They are a generalization of double circulant codes by choosing a polynomial $P$ to define the quotient-ring $\Fqm[X]/(P)$. More details about this construction can be found in \cite{aragon2019low}.

\begin{definition}[Ideal codes]\label{def:IdealCodes}
Let $P(X) \in \Fq[X]$ be a polynomial of degree $n$ and $\gv_1,\gv_2
\in \Fqm^k$. Let $G_1(X) = \sum_{i=0}^{k-1} g_{1i}X^i$ and $G_2(X) =
\sum_{j=0}^{k-1} g_{2j}X^j$ be the polynomials associated respectively to $\gv_1$ and $\gv_2$.
We call the $[2k,k]_{q^m}$ {\em ideal code $\C$ of generator $(\gv_1,\gv_2)$} the code with generator matrix

$$
\Gv = \begin{pmatrix}
G_1(X) \mod P			& \vline &G_2(X) \mod P \\
XG_1(X) \mod P 			& \vline &XG_2(X) \mod P \\
	\vdots				& \vline &\vdots \\
X^{k-1}G_1(X) \mod P 	& \vline &X^{k-1}G_2(X) \mod P 
\end{pmatrix}.
$$

More concisely, we have 
$\C = \{ (\xv\gv_1 \mod P, \xv\gv_2 \mod P), \xv\in \Fqm^k \}$.
We will often omit mentioning the polynomial $P$ if there is no ambiguity.

We usually require $\gv_1$ to be invertible, in which case the code
admits the systematic form, $\C = \{(\xv,\xv\gv), \xv\in \Fqm^k \}$ with $\gv = \gv_1^{-1}\gv_2 \mod P$. 
\end{definition}

%
%

\subsection{Difficult problems in rank metric}

\subsubsection{Rank Syndrome Decoding and ideal variant}


\begin{problem}[Rank Syndrome Decoding]
\label{prob:RSD}
On input $(\Hv, \sv) \in \Fqm^{(n-k)\times n}\times\Fqm^{(n-k)}$, the Rank Syndrome Decoding Problem $\RSD_{n, k, r}$ is to compute $\ev \in \Fqm^n$ such that $\Hv\ev^\intercal = \sv^\intercal$ and $\norme{\ev} = r$. 
\end{problem}

In \cite{GZ14} it is proven that the Syndrome Decoding problem in the Hamming metric, which is a well-known NP-hard problem, is probabilistically reduced to the $\RSD$ problem. Moreover, the $\RSD$ problem can be seen as a structured version of the NP-Hard  MinRank problem~\cite{buss1999computational}, indeed the MinRank problem is equivalent to the $\RSD$ problem replacing $\Fqm$-linear codes by $\Fq$-linear codes.
The variant of this problem for ideal codes is as follows.

%


\begin{problem}[Ideal-Rank Syndrome Decoding]\label{prob:I-RSD}
Let $P\in \Fq[X]$ a polynomial of degree $k$. On input $(\hv, \sigmav) \in \Fqm^k\times\Fqm^{k}$, the Ideal-Rank Syndrome Decoding Problem $\IRSD_{2k, k, r}$ is to compute $\xv = (\xv_1,\xv_2) \in \Fqm^{2k}$ such that $\xv_1 + \xv_2\hv = \sigmav \mod P$ and $\norme{\xv} = r$. 
\end{problem}

Since $\hv$ and $P$ define a systematic parity-check matrix of a
$[2k,k]_{q^m}$ ideal code, the $\IRSD$ problem is a particular case of
the $\RSD$ problem. Although this problem is theoretically easier than the 
$\RSD$ problem, in practice the best algorithms for solving both these problems are the same.

\subsubsection{Rank Support Learning}

The following problem was introduced in \cite{GHPT17a_sv}. It is similar
to the $\RSD$  problem, the difference is that instead of having one syndrome, we are given several syndromes of errors of same support and the goal is to find this support. The security of $\RSL$ is considered to be similar to $\RSD$ for a small number of syndromes. More details about the security of $\RSL$ are provided in Section \ref{sec:secu}.


%
%

\begin{problem}{\textbf{Rank Support Learning (RSL)}} \cite{GHPT17a_sv}
\label{prob:RSL}
On input $(\Hv, \Sv) \in \Fqm^{(n-k)\times n}\times\Fqm^{\ell\times (n-k)}$, the Rank Support Learning Problem $\RSL_{n, k, r, \ell}$ is to compute a subspace  $E$ of $\Fqm$ of dimension $r$, such that there exists a matrix $\Vv \in E^{\ell\times n}$ such that $\Hv\Vv^\intercal = \Sv^\intercal$
\end{problem}

The $\RSL$ problem also has an ideal variant called $\IRSL$.

\subsubsection{Decisional problems}

For all the problems $\RSD, \IRSD, \RSL$ and $\IRSL$ defined above, we can give a decisional version whose goal is to distinguish (for the example of $\RSD$) between a random input $(\Hv, \sv)$ or an actual syndrome input $(\Hv, \Hv\ev^\intercal)$.  We denote these decisional versions $\DRSD, \DIRSD, \DRSL$ and $\DIRSL$. The reader is referred to \cite{aragon2019low} for more details about decisional problems.

\section{LRPC codes and their decoding} \label{sec:LRPC}

\subsection{Low Rank Parity Check codes}

LRPC codes were introduced in \cite{gaborit2013low}. They are the equivalent of MDPC codes from the Hamming metric. They have a strong decoding power and a weak algebraic structure, therefore they are well suited codes for cryptography.
\begin{definition}[LRPC codes]\label{def:LRPC}
Let $\Hv = (h_{ij})_{\substack{1\leqslant i \leqslant n-k\\ 1
    \leqslant j \leqslant n}} \in \Fqm^{(n-k)\times n}$ be a full-rank matrix such that its coordinates generate an $\Fq$-subspace $F$ of small dimension $d$:
\[
F = \langle h_{ij}\rangle_{\Fq}.
\]
Let $\C$ be the code with parity-check matrix $\Hv$. By definition, $\C$ is an $[n,k]_{q^m}$ LRPC code of dual weight $d$.
Such a matrix $\Hv$ is called a homogeneous matrix of weight $d$ and support $F$.
\end{definition}

We can now define ideal LRPC codes similarly to our definition of ideal codes. 

\begin{definition}[Ideal LRPC codes]\label{def:I-LRPC}
Let $F$ be an $\Fq$-subspace of dimension $d$ of $\Fqm$, let
$(\hv_1,\hv_2)$ be two vectors of $\Fqm^k$ of support $F$ and let $P
\in \Fq[X]$ be a polynomial of degree $k$.
Let
$$
\Hv_1 = \begin{pmatrix}
\hv_1 \\
X\hv_1 \mod P \\
\vdots \\
X^{k-1}\hv_1 \mod P
\end{pmatrix}^T
\text{ and }
\Hv_2 = \begin{pmatrix}
\hv_2 \\
X\hv_2 \mod P \\
\vdots \\
X^{k-1}\hv_2 \mod P
\end{pmatrix}^T.
$$

When the matrix $\Hv = (\Hv_1|\Hv_2)$ has rank $k$ over $\Fqm$,
the code $\C$ with parity check matrix $\Hv$ is called an ideal LRPC code of type $[2k,k]_{q^m}$.
\end{definition}

Since $P \in \Fq[X]$, the support of $X^i\hv_1$ is still $F$ for all $1 \leqslant i \leqslant k-1$. Hence the necessity to choose $P$ with coefficients in the base field $\Fq$ to keep the LRPC structure of the ideal code.

\subsection{A basic decoding algorithm}
\label{subsec:basic-decoding}

\begin{problem}[Decoding LRPC codes] \label{prob:decoding_lrpc} Given $\Hv = (h_{ij})_{\substack{1\leqslant i \leqslant n-k\\ 1
    \leqslant j \leqslant n}} \in \Fqm^{(n-k)\times n}$ a parity-check matrix of an LRPC code such that $h_{ij} \in F$ a subspace of $\Fqm$ of dimension $d$, a syndrome $\sv \in \Fqm^{n-k}$, and an integer $r$, the problem is to find a subspace $E$ of dimension at most $r$ such that there exists $\ev \in E^n$, $\Hv \ev^\intercal = \sv^\intercal$.
\end{problem}

Traditionally the decoding operation consists in finding not only the error support $E$ but also the exact vector $\ev$. However, in that case it is only a trivial algebraic computation to find the vector $\ev$ when $E$ is known, that is why we confuse both.

We denote by $EF$ the subspace generated by the product of the elements of $E$ and $F$:
\[
EF = \left\langle \{ ef, e \in E, f\in F\} \right\rangle
\]
In the typical case $\dim EF = rd$. For the considered parameters, it can happen that $\dim EF < rd$, but this case is also covered without modification.\\

A basic decoding algorithm is described in Algorithm~\ref{algo:RSR}. In the case where the syndrome $\sv$ is indeed generated by $\Hv\ev^\intercal$ where $\ev$ is in a support $E$, the coordinates of $\sv$ are in a product space $EF$ .The general idea of the algorithm is to use the fact that we know a parity-check matrix $\Hv$ of the LRPC code such that each of its coordinates $h_{ij}$ belongs to an $\Fq$-subspace $F$ of $\Fqm$ of small dimension $d$, hence the
subspace $S= \vect{s_1,\dots,s_{n-k}}$ generated by the coordinates of the syndrome 
enables one to recover the whole product space $EF$. The knowledge of both $EF$ and $F$ enables to recover $E$.
This approach is very similar to the classical decoding procedure
of BCH codes for instance, where one recovers the error-locator polynomial, which gives
the support of the error.\\

\begin{algorithm}[H]\label{algo:RSR}
\KwData{ $F = \langle f_1, ..., f_d\rangle$ an $\Fq$-subspace of $\Fqm$, $\sv=(s_1,\cdots,s_{n-k}) \in \Fqm^{(n-k)}$ a syndrome of an error $\ev$ of weight $r$ and of support $E$ }
\KwResult{A candidate for the vector space $E$}

\Comment{\textbf{Part 1: } Compute the vector space $\EF$}
Compute $S=\langle s_1,\cdots,s_{n-k}\rangle$

\Comment{\textbf{Part 2: } Recover the vector space $E$}
$E \leftarrow \bigcap_{i=1}^d f_i^{-1}S $
\Return{$E$}

\caption{Rank Support Recovery (\RSR) algorithm }
	
\end{algorithm}

\paragraph{Notation.}
For all $i$ we denote $S_i$ the space $f_i^{-1}S$.

\paragraph{Probability of failure.}

There are two cases for which the decoding algorithm can fail:
\begin{itemize}
	\item $S \subsetneq EF$, the syndrome coordinates do not generate the entire space $EF$, or
	\item $E \subsetneq S_1 \cap \dots \cap S_d$, the chain of intersections generates a space of larger dimension than $E$.
\end{itemize}

From~\cite{aragon2019low} we have that the probability of the first failure case $S \subsetneq EF$ is less than $q^{rd - (n-k) - 1}$.
In~\cite{ABDGHRTZABBBO19}, under the assumption that  the $S_i$ behave as random subspaces containing $E$ (which is validated by simulations), it is proven that the probability of the second failure case $E \subsetneq S_1 \cap \dots \cap S_d$ is less than $q^{-(d-1)(m-rd-r)}$. This leads to the following proposition from~\cite{ABDGHRTZABBBO19}:
\begin{proposition}
\label{prop:basic-DFR}
The Decoding Failure Rate of algorithm \ref{algo:RSR} is bounded from above by:
$$q^{-(d-1)(m-rd-r)} + q^{rd - (n-k) - 1}$$
\end{proposition}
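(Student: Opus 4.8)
The plan is to analyze the two ways Algorithm~\ref{algo:RSR} can deviate from returning the correct support $E$, and to bound the overall failure probability by a union bound over these two events, substituting the two probability estimates recalled just above the statement.

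First I would establish the two structural inclusions that hold when the input syndrome is genuine, i.e. $\sv^\intercal = \Hv\ev^\intercal$ with $\ev \in E^n$. Writing $s_j = \sum_{l=1}^n h_{jl} e_l$ and using $h_{jl} \in F$ and $e_l \in E$, every coordinate $s_j$ lies in the product space $EF$, so that $S = \vect{s_1,\dots,s_{n-k}} \subseteq EF$ holds unconditionally. Next, conditioning on the event $S = EF$, I would show $E \subseteq S_i$ for each $i$: for $e \in E$ we have $f_i e \in EF = S$, hence $e = f_i^{-1}(f_i e) \in f_i^{-1} S = S_i$. Taking the intersection gives $E \subseteq \bigcap_{i=1}^d S_i$. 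Thus, whenever $S = EF$, the space returned by the algorithm always contains $E$, and the output equals $E$ exactly when this containment is an equality.

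These inclusions isolate the two failure modes. The algorithm returns $\bigcap_{i=1}^d S_i$, and it can only fail if either (i) $S \subsetneq EF$, in which case $S$ does not even generate the product space required by the recovery step, or (ii) $S = EF$ but $E \subsetneq \bigcap_{i=1}^d S_i$, so the chain of intersections overshoots $E$. Formally, the failure event is contained in $A \cup B$ with $A = \{S \subsetneq EF\}$ and $B = \{E \subsetneq \bigcap_{i=1}^d S_i\}$, the latter analyzed on $A^c$, where the inclusion $E \subseteq \bigcap_i S_i$ is guaranteed. A union bound then yields $\mathrm{DFR} \leq \Prob(A) + \Prob(B)$, and inserting the two bounds recalled above, namely $\Prob(A) \leq q^{rd-(n-k)-1}$ from \cite{aragon2019low} and $\Prob(B) \leq q^{-(d-1)(m-rd-r)}$ from \cite{ABDGHRTZABBBO19}, gives the claimed estimate $q^{-(d-1)(m-rd-r)} + q^{rd-(n-k)-1}$.

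The easy part is the union bound together with the two inclusions, which are elementary linear algebra over $\Fqm$. The hard part, which I would otherwise have to carry out but here take as given, is the derivation of the two probabilistic bounds themselves: the bound on $\Prob(A)$ requires estimating the probability that the $n-k$ syndrome coordinates, viewed as elements of $EF$, fail to span it, while the bound on $\Prob(B)$ rests on the heuristic, validated by simulation, that the spaces $S_i = f_i^{-1}S$ behave like independent random subspaces of $\Fqm$ of dimension $rd$ containing $E$, so that a $d$-fold intersection collapses back to $E$ except with probability roughly $q^{-(d-1)(m-rd-r)}$. Controlling the dependence between the $S_i$ rigorously is the genuine obstacle lurking behind this second bound.
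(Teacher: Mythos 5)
Your proposal is correct and matches the paper's (essentially implicit) argument: the paper likewise decomposes the failure event into the two cases $S \subsetneq EF$ and $E \subsetneq S_1 \cap \dots \cap S_d$, imports the bound $q^{rd-(n-k)-1}$ from \cite{aragon2019low} and the bound $q^{-(d-1)(m-rd-r)}$ from \cite{ABDGHRTZABBBO19} (the latter under the same simulation-validated independence heuristic for the $S_i$), and concludes by a union bound. Your additional spelling-out of the inclusions $S \subseteq EF$ and $E \subseteq \bigcap_i S_i$ is a correct and welcome elaboration of what the paper leaves implicit.
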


\paragraph{Computational cost of decoding.}

According to \cite{aragon2019low}, the computational cost of the decoding algorithm is in $\mathcal O(4 r^2 d^2 m + n^2r)$ operations in the base field $\Fq$.\\

There is an improved version of this decoding algorithm which was presented in \cite{aragon2019low}. However, we do not need these improvements in the present document.

\subsection{LRPC codes indistinguishability}

LRPC codes are easy to hide since we only need to reveal their systematic parity-check matrix. Due to their weak algebraic structure, it is hard to distinguish an LRPC code in its systematic form and a random systematic matrix. We can now introduce formally this problem, on which LRPC cryptosystems, and thus ours, are based.

\begin{problem}[LRPC codes decisional problem - $\LRPC$]\label{prob:IndLRPC} Given a matrix $\Hv \in \Fqm^{(n-k)\times k}$, distinguish whether the code $\C$ with the parity-check matrix $(\Iv_{n-k}|\Hv)$ is a random  code or an LRPC code of weight $d$.
\end{problem}
	
The problem can also be stated as: distinguish whether $\Hv$ was sampled uniformly at random or as $\Av^{-1}\Bv$ where the matrices $\Av$ (of size $n-k\times n-k$) and $\Bv$ (of size $n-k\times k$) have the same support of small dimension $d$. The structured variant of the above problem follows immediately.

\begin{problem}[Ideal LRPC codes decisional problem - $\ILRPC$]\label{prob:IndILRPC} Given a polynomial $P \in \Fq[X]$ of degree $n$ and a vector $\hv \in \Fqm^n$, distinguish whether the ideal code $\C$ with the parity-check matrix generated by $\hv$ and $P$ is a random ideal code or an ideal LRPC code of weight $d$.
\end{problem}

Again, the problem can also be stated as: distinguish whether $\hv$ was sampled uniformly at random or as $\xv^{-1}\yv \mod P$ where the vectors $\xv$ and $\yv$ have the same support of small dimension $d$.\\

The hardness of these decisional problems is presented in Section \ref{sec:secu}.

\section{LRPC with multiple syndromes} \label{sec:multi-LRPC}

\subsection{General idea}

The decoding algorithm presented in the previous section has a probability of failure whose main component is $q^{rd-(n-k)-1}$ (see Proposition \ref{prop:basic-DFR}) so it forces one to have a large $n$ in an LRPC-cryptosystem in order to obtain a DFR below $2^{-128}$. To overcome this constraint, we made the observation that when several syndromes with same error support $(\sv_1, ..., \sv_{\ell})$ were used in the decoding algorithm, the DFR was decreasing. This fact is the cornerstone of our new cryptosystem. We describe below the associated decoding problem.\\

\begin{problem}[Decoding LRPC codes with mutliple syndromes] \label{prob:decoding_lrpc_multi} Given $\Hv \in \Fqm^{(n-k)\times n}$ a parity-check matrix of an LRPC code of dimension d and support $F \subset \Fqm$, a set of $\ell$ syndromes $\sv_i \in \Fqm^{n-k}$ for $1 \leqslant i \leqslant \ell$, and an integer $r$, the problem is to find a subspace $E$ of dimension at most $r$ such that there exists an error matrix $\Vv \in E^{n \times \ell}$ satisfying $\Hv \Vv = \Sv$ where the i-th column of $\Sv$ is equal to $\sv_i^\intercal$.
\end{problem}

In order to solve this decoding problem, we introduce the Rank Support Recovery algorithm with multiple syndromes (Algorithm \ref{algo:multi-RSR}). It is exactly the same as Algorithm \ref{algo:RSR}, but several columns are given to compute the syndrome space $S$. Intuitively, because the syndrome matrix $\Hv\Vv$ has $(n-k) \times \ell$ coordinates inside the space $EF$ of dimension $rd$, we would expect the Decoding Failure Rate of this new algorithm to be approximately $q^{rd-(n-k)\ell}$. Actually, because the coordinates of $\Hv\Vv$ are not independent between each other, the result is not trivially established and requires technical lemmas which are presented in Section \ref{sec:proof-multi-RSR}.\\

\begin{algorithm}[H]\label{algo:multi-RSR}
	\KwData{ $F = \langle f_1, ..., f_d\rangle$ an $\Fq$-subspace of $\Fqm$, $\Sv=(s_{ij}) \in \Fqm^{(n-k)\times \ell}$ the $\ell$ syndromes of error vectors of weight $r$ and support $E$}
	\KwResult{A candidate for the vector space $E$}
	
	\Comment{\textbf{Part 1: } Compute the vector space $\EF$}
	Compute $S=\langle s_{11},\cdots,s_{(n-k)\ell}\rangle$
	
	\Comment{\textbf{Part 2: } Recover the vector space $E$}
	$E \leftarrow \bigcap_{i=1}^d f_i^{-1}S $
	
	\Return{$E$}
	
	\caption{Rank Support Recovery (\RSR) algorithm with multiple syndromes}
	
\end{algorithm}

\medskip

In the following subsection, we describe our new scheme and its ideal variant, then study the Decoding Failure Rate.

\subsection{Description of the scheme (LRPC-MS)}
\label{subsec:descriptionKEM}

\begin{definition}
A Key Encapsulation Mechanism KEM = $(\KeyGen,\Encap,\Decap)$ is a triple of probabilistic algorithms together with a key space $\mathcal{K}$. The key generation algorithm $\KeyGen$ generates a pair of public and secret keys $(\pk,\sk)$. The encapsulation algorithm $\Encap$ uses the public key $\pk$ to produce an encapsulation $c$ and a key $K \in \mathcal{K}$. Finally $\Decap$, using the secret key $\sk$ and an encapsulation $c$, recovers the key $K \in \mathcal{K}$, or fails and returns $\bot$.
\end{definition}

Our scheme contains a hash function $G$ modeled as a random oracle.
\begin{itemize}
\item $\KeyGen(1^\seck)$: 
\begin{itemize}
\item choose uniformly at random a subspace $F$ of $\Fqm$ of dimension $d$ and sample an couple of homogeneous matrices of same support $\Uv = (\Av | \Bv) \rand F^{(n-k)\times (n-k)} \times F^{(n-k)\times k}$ such that $\Av$ is invertible.
\item compute $\Hv = (\Iv_{n-k} | \Av^{-1}\Bv)$.
\item define $\pk = \Hv$ and $\sk = (F, \Av)$.
\end{itemize}
\item $\Encap(\pk)$:
\begin{itemize}
\item choose uniformly at random a subspace $E$ of $\Fqm$ of dimension $r$ and sample a matrix $\Vv \rand E^{n \times\ell}$.
\item compute $\Cv = \Hv\Vv$.
\item define $K = G(E)$ and return $\Cv$.
\end{itemize}
\item $\Decap(\sk)$:
\begin{itemize}
\item compute $\Sv = \Av\Cv~(= \Uv\Vv)$
\item recover $E \leftarrow \RSR(F, \Sv, r)$ (Algorithm \ref{algo:multi-RSR}).
\item return $K = G(E)$ or $\bot$ (if $\RSR$ failed).
\end{itemize}
\end{itemize}

We need to have a common representation of a subspace of dimension $r$ of $\Fqm$. The natural way is to choose the unique matrix $\Mv \in \Fq^{r\times m}$ of size $r\times m$ in its reduced row echelon form such that the rows of $\Mv$ are a basis of $E$.

An informal description of this scheme can be found in Figure~\ref{fig:new_proto}. We deal with the semantic security of the KEM in Section~\ref{sec:secu}.

\begin{figure*}[bt]

\resizebox{\textwidth}{!}{
\fbox{\hspace{-.0cm}%
\begin{minipage}{1.15\textwidth}
\begin{minipage}{.4225\textwidth}
	\parbox{\textwidth}{\centering
		\underline{Alice}}
\end{minipage} 
\begin{minipage}{.1125\textwidth}
	\parbox{\textwidth}{}
\end{minipage} ~
\begin{minipage}{.4225\textwidth}
	\parbox{\textwidth}{\centering
		\underline{Bob}}
\end{minipage} \\[2mm]
\begin{minipage}{.4225\textwidth}
	\parbox{\textwidth}{\centering%
		choose $F$ of dimension $d$ at random\\
		$\Uv = (\Av | \Bv) \rand F^{(n-k)\times n}$, $\Hv = (\Iv_{(n-k)} | \Av^{-1}\Bv)$ syst. form of $\Uv$\\[10mm]
		$\Sv = \Av\Cv$\\
		$E \leftarrow \RSR(F, \Sv, r)$\\
		~\\[2mm]
		\fbox{$G\left(E\right)$}%
	}%
\end{minipage} ~
\begin{minipage}{.1125\textwidth}
	\parbox{\textwidth}{\centering%
		~\\[7mm]
		$\xrightarrow{~~~~\mathbf{H}~~~~}$\\[6mm]
		$\xleftarrow{~~~~\mathbf{C}~~~~}$\\[8mm]
		
		\textsc{Shared Secret}%
	}
\end{minipage} ~
\begin{minipage}{.4225\textwidth}
	\parbox{\textwidth}{\centering%
		~\\[8mm]
		choose $E$ of dimension $r$ at random\\
		$\Vv \rand E^{n\times\ell}$ \\
		$\Cv = \Hv \Vv$\\[12mm]
		\fbox{$G\left(E\right)$}%
	}
\end{minipage}
\end{minipage}
}%
}
\caption{\label{fig:new_proto}Informal description of our new Key Encapsulation Mechanism LRPC-MS. $\Hv$ constitutes the public key.}
\end{figure*}

\subsection{Description of the scheme with ideal structure (ILRPC-MS)}
\label{subsec:descriptionIdealKEM}

Our scheme contains a hash function $G$ modeled as a random oracle and an irreducible polynomial $P \in \Fq[X]$ of degree $k$, which are public parameters.
\begin{itemize}
\item $\KeyGen(1^\seck)$: 
\begin{itemize}
\item choose uniformly at random a subspace $F$ of $\Fqm$ of dimension $d$ and sample a couple of 	vectors $(\xv,\yv) \rand F^k \times F^k$ such that $\Supp(\xv)=\Supp(\yv)=F$.
\item compute $\hv = \xv^{-1}\yv \mod P$.
\item define $\pk = (\hv, P)$ and $\sk = (\xv,\yv)$.
\end{itemize}
\item $\Encap(\pk)$:
\begin{itemize}
\item choose uniformly at random a subspace $E$ of $\Fqm$ of dimension $r$ and sample a tuple of 	vectors $(\ev_1,...,\ev_{2\ell}) \rand (E^k)^{2\ell}$ such that $\Supp(\ev_1,...,\ev_{2\ell})=E$.
\item compute for all $1 \leq i \leq \ell$, $\cv_i = \ev_{2i-1} + \ev_{2i}\hv \mod P$.
\item define $K = G(E)$ and return $(\cv_1,...,\cv_{\ell})$.
\end{itemize}
\item $\Decap(\sk)$:
\begin{itemize}
\item compute for all $i$, $\xv\cv_i = \xv\ev_{2i-1}+ \yv\ev_{2i} \mod P$ and 
\item compute $\Sv = (\xv\cv_1,...,\xv\cv_{\ell})$
\item recover $E \leftarrow \RSR(F, \Sv, r)$ (Algorithm \ref{algo:multi-RSR}).
\item return $K = G(E)$ or $\bot$ (if $\RSR$ failed).
\end{itemize}
\end{itemize}

An informal description of this scheme is found in Figure~\ref{fig:new_proto_ideal}. As for the non-ideal scheme, we deal with the semantic security of the KEM in Section~\ref{sec:secu}.

\begin{figure*}[bt]
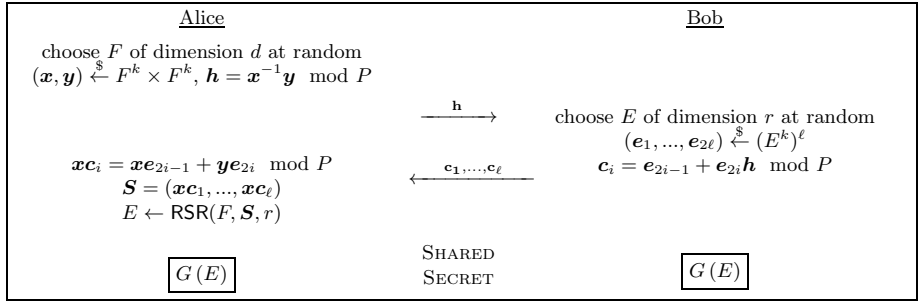


\resizebox{\textwidth}{!}{
\fbox{\hspace{-.0cm}%
\begin{minipage}{1.15\textwidth}
	\begin{minipage}{.4225\textwidth}
		\parbox{\textwidth}{\centering
			\underline{Alice}}
	\end{minipage} 
	\begin{minipage}{.1125\textwidth}
		\parbox{\textwidth}{}
	\end{minipage} ~
	\begin{minipage}{.4225\textwidth}
		\parbox{\textwidth}{\centering
			\underline{Bob}}
	\end{minipage} \\[2mm]
	\begin{minipage}{.4225\textwidth}
		\parbox{\textwidth}{\centering%
			choose $F$ of dimension $d$ at random\\
			$(\xv,\yv) \rand F^k\times F^k$, $\hv = \xv^{-1}\yv \mod P$\\[10mm]
			$\xv\cv_i = \xv\ev_{2i-1} + \yv\ev_{2i} \mod P$\\
			$\Sv = (\xv\cv_1,...,\xv\cv_{\ell})$\\
			$E \leftarrow \RSR(F, \Sv, r)$\\
			~\\[2mm]
			\fbox{$G\left(E\right)$}%
		}%
	\end{minipage} ~
	\begin{minipage}{.1125\textwidth}
		\parbox{\textwidth}{\centering%
			~\\[7mm]
			$\xrightarrow{~~~~\mathbf{h}~~~~}$\\[6mm]
			$\xleftarrow{~~~~\mathbf{c_1,...,c_{\ell}}~~~~}$\\[8mm]
			
			\textsc{Shared Secret}%
		}
	\end{minipage} ~
	\begin{minipage}{.4225\textwidth}
		\parbox{\textwidth}{\centering%
			~\\[8mm]
			choose $E$ of dimension $r$ at random\\
			$(\ev_1,...,\ev_{2\ell}) \rand (E^k)^{\ell}$ \\
			$\cv_i = \ev_{2i-1} + \ev_{2i}\hv \mod P$\\[12mm]
			\fbox{$G\left(E\right)$}%
		}
	\end{minipage}
\end{minipage}
}%
}
\caption{\label{fig:new_proto_ideal}Informal description of our new Key Encapsulation Mechanism with ideal structure ILRPC-MS. $\mathbf{h}$ constitutes the public key.}
\end{figure*}

\subsection{Decoding Failure Rate of our scheme}
\label{subsec:DFR}

The Decoding Failure Rate (DFR) of our scheme is the probability of failure of the Rank Support Recovery algorithm with multiple syndromes described in Algorithm \ref{algo:multi-RSR}.
As stated in Section \ref{subsec:basic-decoding}, the two cases that can provoke a failure of the algorithm are:

\begin{itemize}
	\item $S \subsetneq EF$, the coordinates of the matrix $\Uv\Vv$ do not generate the entire space $EF$, or
	\item $E \subsetneq S_1 \cap \dots \cap S_d$, the chain of intersections generate a space of larger dimension than $E$.
\end{itemize}

To study the probability of each case, we restrict ourselves to the case $\dim(EF) = rd$. Indeed, when  $\dim(EF) < rd$, the correctness of the algorithm is preserved, and the probabilities associated to the two sources of decoding failures are lower than in the case $\dim(EF) = rd$, since all the vector spaces will be of smaller dimensions. Hence this restriction will lead to an upper bound on the failure probability.\\

The first case of failure can be dealt with the following theorem, which will be fully proven in Section \ref{sec:proof-multi-RSR}. Its immediate corollary yields the probability of failure for the first case. We will assume for the rest of this document that $q = 2$ since the theorem is only proven in that case.

\begin{theorem}
	For $n_1 + n_2 \leq n$ and for $\Uv$ and $\Vv$ random variables chosen uniformly in $F^{n_1\times n}$ and $E^{n\times n_2}$ respectively, $\mathbb P(\Supp{(\Uv \Vv)} \neq EF) \leq n_1q^{rd-n_1n_2}$
\end{theorem}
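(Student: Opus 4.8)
The plan is to bound the failure probability by a union bound over the hyperplanes of $EF$, and to translate the event ``all entries of $\Uv\Vv$ lie in a fixed hyperplane'' into the vanishing of a single matrix over $\Fq$ built from the $\Fq$-coordinates of $\Uv$ and $\Vv$. First I would record that every entry of $\Uv\Vv$ is an $\Fq$-combination of products $ef$ with $e\in E,\ f\in F$, so $\Supp(\Uv\Vv)\subseteq EF$ always; the bad event is that these entries span a proper subspace, hence lie in some hyperplane $H=\ker\lambda$ of $EF$, where for $q=2$ nonzero functionals $\lambda\in(EF)^\ast$ and hyperplanes correspond bijectively. Fixing bases $f_1,\dots,f_d$ of $F$ and $e_1,\dots,e_r$ of $E$, I write the coordinate matrices $U^{(a)}=(U^{(a)}_{ik})\in\Fq^{n_1\times n}$ and $V^{(b)}=(V^{(b)}_{kj})\in\Fq^{n\times n_2}$, so that $u_{ik}=\sum_a U^{(a)}_{ik}f_a$ and $v_{kj}=\sum_b V^{(b)}_{kj}e_b$; crucially these coordinate matrices have i.i.d.\ uniform entries over $\Fq$ and the $U^{(a)}$ are independent of the $V^{(b)}$.

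Setting $\beta_{ab}=\lambda(f_ae_b)$, a direct expansion gives $\lambda\big((\Uv\Vv)_{ij}\big)=(M_\beta)_{ij}$ with $M_\beta:=\sum_{a,b}\beta_{ab}U^{(a)}V^{(b)}\in\Fq^{n_1\times n_2}$, so that ``all entries of $\Uv\Vv$ lie in $H$'' becomes exactly the matrix identity $M_\beta=0$. Since $\lambda\mapsto\beta$ is injective and sends nonzero functionals to nonzero matrices (the $f_ae_b$ span $EF$), a union bound yields $\mathbb{P}(\Supp(\Uv\Vv)\neq EF)\le\sum_{\beta\neq 0}\mathbb{P}(M_\beta=0)$, where I may let $\beta$ range over \emph{all} nonzero matrices of $\Fq^{d\times r}$ (this only enlarges the bound and frees the argument from the value of $\dim EF$).

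Next I would exploit invariance: since $(U^{(a)})_a$ is invariant in law under $GL_d(\Fq)$ acting on the index $a$ and $(V^{(b)})_b$ under $GL_r(\Fq)$, one checks $M_{P\beta Q}$ has the same law as $M_\beta$ for all $P\in GL_d(\Fq),\ Q\in GL_r(\Fq)$. Hence $\mathbb{P}(M_\beta=0)$ depends only on $t=\rank\beta$, and putting $\beta$ in rank normal form reduces $M_\beta$ to $\sum_{a=1}^{t}U^{(a)}V^{(a)}=\mathcal{A}\mathcal{B}$, where $\mathcal{A}=[\,U^{(1)}|\cdots|U^{(t)}\,]\in\Fq^{n_1\times tn}$ and $\mathcal{B}=[\,V^{(1)};\cdots;V^{(t)}\,]\in\Fq^{tn\times n_2}$ are independent and uniform. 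Conditioning on $\mathcal{A}$, the columns of $\mathcal{B}$ must lie in $\ker\mathcal{A}$, so $p_t:=\mathbb{P}(\mathcal{A}\mathcal{B}=0)=\mathbb{E}\big[q^{-n_2\rank\mathcal{A}}\big]$, and the whole estimate becomes $\sum_{t\ge 1}N_t\,p_t$, where $N_t$ is the number of rank-$t$ matrices in $\Fq^{d\times r}$ and $\sum_t N_t=q^{rd}-1$.

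The technical heart, and the step I expect to be the main obstacle, is a bound on $p_t$ that is uniform in $t$. Writing the corank $c=n_1-\rank\mathcal{A}$ and using the standard estimate that a uniform $n_1\times tn$ matrix has corank $c$ with probability at most $\gamma_q\,q^{-c(tn-n_1+c)}$ for an absolute constant $\gamma_q$, the corank-$c$ contribution to $p_t$ is at most $\gamma_q\,q^{-n_1n_2}\,q^{\,c(n_1+n_2-tn-c)}$. Here the hypothesis $n_1+n_2\le n\le tn$ is decisive: it forces the exponent to satisfy $c(n_1+n_2-tn-c)\le -c^2$, so summing over $c\ge 0$ gives $p_t\le \delta_q\,q^{-n_1n_2}$ for an absolute constant $\delta_q$ (one checks $\delta_q<3$ when $q=2$), uniformly in $t$. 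Plugging this in gives $\sum_{\beta\neq 0}\mathbb{P}(M_\beta=0)\le \delta_q\,q^{rd}q^{-n_1n_2}$, which is at most $n_1\,q^{rd-n_1n_2}$ in the regime $n_1\ge\delta_q$ of interest, proving the theorem. I would remark that this route in fact yields the sharper constant-factor bound $\delta_q\,q^{rd-n_1n_2}$, and that if one instead prefers to obtain the factor $n_1$ unconditionally, a cruder row-by-row union bound (revealing the rows of $\Uv$ one at a time) recovers it directly. The points needing genuine care are the joint uniformity and independence of the coordinate matrices, the $GL_d\times GL_r$ reduction to $\rank\beta$, and the corank tail estimate above, where the assumption $n_1+n_2\le n$ is used in an essential way.
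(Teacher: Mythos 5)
Your proposal is correct and follows essentially the same route as the paper: a union bound over the $q^{rd}-1$ nonzero linear forms on $EF$, reduction via the rank $t$ of the induced bilinear form to the vanishing of a product $\mathcal{A}\mathcal{B}$ of independent uniform binary matrices (the paper's expanded matrix $\Uv^s$ is exactly your $\mathcal{A}$), conditioning on $\rank \mathcal{A}$, and a corank tail bound that uses $n_1+n_2\le n$. The only differences are cosmetic: the paper reduces to the worst case $t=1$ via stochastic domination of the ranks rather than bounding uniformly in $t$, and its cruder corank estimate yields the factor $n_1$ directly instead of your constant $\delta_q$.
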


\begin{corollary}
For $k \geq \ell$ and for $\Uv$ and $\Vv$ random variables chosen uniformly in $F^{(n-k)\times n}$ and $E^{n\times\ell}$ respectively, the probability that the syndrome space $S$ computed by the algorithm $\RSR(F, \Uv\Vv, r)$ is not equal to $EF$ is bounded by above by $(n-k)q^{rd-(n-k)\ell}$
\end{corollary}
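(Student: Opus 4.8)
The plan is to obtain the corollary as a direct instantiation of the preceding theorem, taking $n_1 = n-k$ and $n_2 = \ell$. First I would verify that the hypotheses match: the theorem requires $n_1 + n_2 \leq n$, and here $n_1 + n_2 = (n-k) + \ell$, so this condition is equivalent to $\ell \leq k$, which is exactly the standing assumption $k \geq \ell$ of the corollary. The sampling also lines up, since $\Uv$ and $\Vv$ are drawn uniformly from $F^{(n-k)\times n}$ and $E^{n\times\ell}$, matching the theorem's $\Uv \in F^{n_1\times n}$ and $\Vv \in E^{n\times n_2}$.

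Next I would identify the event bounded by the theorem with the failure event described in the corollary. In Algorithm~\ref{algo:multi-RSR} the syndrome space is defined as $S = \langle s_{11}, \dots, s_{(n-k)\ell}\rangle$, namely the $\Fq$-subspace generated by all entries of the matrix $\Sv = \Uv\Vv$; by the definition of the support of a matrix this is precisely $\Supp(\Uv\Vv)$. Moreover, every entry of $\Uv\Vv$ is an $\Fq$-linear combination of products $ef$ with $e \in E$ and $f \in F$, so it lies in $EF$; hence $\Supp(\Uv\Vv) \subseteq EF$ holds unconditionally, and the event $\{S \neq EF\}$ coincides with $\{\Supp(\Uv\Vv) \neq EF\}$.

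It then remains only to substitute $n_1 = n-k$ and $n_2 = \ell$ into the theorem's bound $n_1 q^{rd - n_1 n_2}$, which immediately gives
\[
\mathbb{P}(S \neq EF) = \mathbb{P}(\Supp(\Uv\Vv) \neq EF) \leq (n-k)\, q^{rd - (n-k)\ell},
\]
as claimed. All the genuine content sits in the theorem (to be established in Section~\ref{sec:proof-multi-RSR}); the only points needing any care in the corollary are the translation of the dimension constraint into the hypothesis $k \geq \ell$ and the identification of the algorithm's syndrome space $S$ with $\Supp(\Uv\Vv)$, both of which are immediate. Consequently I do not expect any real obstacle at this stage: the corollary is a routine specialization rather than a step carrying independent difficulty.
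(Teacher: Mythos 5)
Your proof is correct and matches the paper's intent exactly: the paper presents this as an immediate corollary of the theorem obtained by setting $n_1 = n-k$ and $n_2 = \ell$, with the condition $n_1 + n_2 \leq n$ translating to $k \geq \ell$. Your additional care in identifying the algorithm's syndrome space $S$ with $\Supp(\Uv\Vv)$ is a welcome (if routine) detail that the paper leaves implicit.
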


As for the second failure case, $E \subsetneq S_1 \cap \dots \cap S_d$, we apply again the upper-bound $q^{-(d-1)(m-rd-r)}$, used in Section~\ref{sec:LRPC} for Proposition~\ref{prop:basic-DFR}. This leads to the following proposition:

\begin{proposition}
\label{prop:DFR}
For $k \geq \ell$ and for $\Uv$ and $\Vv$ random variables chosen uniformly in $F^{(n-k)\times n}$ and $E^{n\times\ell}$ respectively, the Decoding Failure Rate of algorithm \ref{algo:multi-RSR} $\RSR(F, \Uv\Vv, r)$ is bounded from above by:
$$q^{-(d-1)(m-rd-r)} + (n-k)q^{rd-(n-k)\ell}$$
\end{proposition}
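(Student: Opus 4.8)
The plan is to bound the two failure modes of Algorithm~\ref{algo:multi-RSR} separately and then combine them with a union bound. By the analysis preceding the statement, the algorithm $\RSR(F,\Uv\Vv,r)$ returns the correct support unless one of two events occurs: either the syndrome space $S=\Supp(\Uv\Vv)$ fails to generate all of $EF$ (call this event $A$: $S\subsetneq EF$), or, even when $S=EF$ is recovered correctly, the chain of intersections overshoots $E$ (event $B$: $E\subsetneq S_1\cap\dots\cap S_d$). Since any decoding failure implies $A$ or $B$, we have $\mathrm{DFR}\le \mathbb P(A)+\mathbb P(B)$. Throughout I would restrict to the generic case $\dim(EF)=rd$, which the text already justifies as yielding an upper bound: a smaller $\dim(EF)$ only shrinks every subspace involved and hence can only lower both failure probabilities.

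For event $A$ I would invoke the Corollary of the Theorem directly. Setting $n_1=n-k$ and $n_2=\ell$, the hypothesis $n_1+n_2\le n$ of the Theorem becomes $(n-k)+\ell\le n$, which is exactly the assumption $k\ge\ell$ appearing in the Proposition. The Corollary then gives $\mathbb P(A)=\mathbb P(S\neq EF)\le (n-k)\,q^{rd-(n-k)\ell}$, precisely the second term of the claimed bound.

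For event $B$ the key observation is that Part~2 of Algorithm~\ref{algo:multi-RSR} is textually identical to Part~2 of the single-syndrome Algorithm~\ref{algo:RSR}: once the space $S=EF$ has been assembled, recovering $E$ as $\bigcap_{i=1}^d f_i^{-1}S$ does not depend on how many syndromes were used to build $S$. Consequently the analysis of the second failure case carries over verbatim from Section~\ref{sec:LRPC}, and under the same heuristic that the spaces $S_i=f_i^{-1}S$ behave as random subspaces containing $E$ (validated by the simulations cited for Proposition~\ref{prop:basic-DFR}), I would reuse the bound $\mathbb P(B)\le q^{-(d-1)(m-rd-r)}$. Adding the two contributions yields exactly the stated DFR bound.

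The only genuinely delicate point is the first failure case, and that is precisely what the Theorem of Section~\ref{sec:proof-multi-RSR} is designed to absorb: establishing that the $(n-k)\ell$ entries of the product $\Uv\Vv$, although far from mutually independent, nonetheless span $EF$ except with probability $(n-k)\,q^{rd-(n-k)\ell}$. For this Proposition that difficulty is already encapsulated in the Corollary, so the remaining work is only the routine union bound together with the transfer of the single-syndrome intersection analysis described above; I expect no further obstacle here.
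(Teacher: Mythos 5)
Your proposal matches the paper's own argument: a union bound over the two failure events, with the first term supplied by the corollary to Theorem~\ref{thm:support-product} (noting that $n_1+n_2\le n$ with $n_1=n-k$, $n_2=\ell$ is exactly $k\ge\ell$) and the second term carried over unchanged from the single-syndrome analysis of Proposition~\ref{prop:basic-DFR} under the same independence heuristic on the $S_i$. No gaps; this is essentially the same proof.
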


This proposition extends immediately to the ideal case without modifications.\\

\subsection{Impact on the asymptotic range of parameters}

By reducing the decoding failure rate, the multiple syndrome approach fundamentally changes the zone of parameters that we consider for our cryptosystem.\\

In previous LRPC code-based cryptosystems, the decoding failure rate imposed the choice of $r$ and $d$ to be below $\sqrt n$ because of the need for $rd < n-k$ (cf. Proposition \ref{prop:basic-DFR}).\\

In this work, we can choose $r$ and $d$ bigger than $\sqrt n$. We will show that it is even possible to reach the rank Gilbert-Varshamov bound $d_{RGV}$. To simplify the rest of the analysis we will consider half-rate codes only, for which $k = n / 2$. In that case there exists a simple upper bound for $d_{RGV}$:

\begin{lemma}
For $k = n/2$, $d_{RGV}(m,n) \leq n/2$.
\end{lemma}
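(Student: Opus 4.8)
The plan is to bound the rank Gilbert--Varshamov distance $d_{RGV}(m,n)$ by finding the largest value of $r$ for which the number of vectors of rank weight at most $r$ in $\Fqm^n$ stays below the total number of syndromes, which for a $[n,k]_{q^m}$ code equals $q^{m(n-k)}$. Recall that $d_{RGV}$ is defined as the largest integer $r$ such that the volume of the rank ball of radius $r-1$ does not exceed $q^{m(n-k)}$; equivalently, it is the point where the number of words of rank weight $r$ first matches the number of available syndromes. So the first step is to write down the size of the rank sphere: the number of $\xv \in \Fqm^n$ with $\norme{\xv} = r$ is $\cg{r}{m}\prod_{i=0}^{r-1}(q^n - q^i)$, using the count of $r$-dimensional supports (the Gaussian coefficient) times the number of full-rank $r \times n$ matrices over $\Fq$ describing the coordinates in a fixed basis of the support.

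Next, with $k = n/2$ the syndrome count is $q^{m(n-k)} = q^{mn/2}$. I would estimate the logarithm (base $q$) of the sphere size using the standard approximations $\log_q \cg{r}{m} \approx r(m-r)$ and $\log_q \prod_{i=0}^{r-1}(q^n - q^i) \approx rn$. This gives the size of the rank ball of radius $r$ as roughly $q^{r(m+n-r)}$ up to lower-order terms. The Gilbert--Varshamov condition then becomes the requirement that this exponent not exceed $mn/2$, i.e.\ $r(m+n-r) \leq mn/2$. The goal is to show that the threshold value $r = d_{RGV}$ satisfies $d_{RGV} \leq n/2$.

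The cleanest way to finish is by contradiction or by direct monotonicity: I would substitute $r = n/2$ into the exponent $r(m+n-r)$ and check that it already meets or exceeds $mn/2$, which forces the Gilbert--Varshamov threshold to occur at or before $r = n/2$. Indeed, plugging in $r = n/2$ gives $\tfrac{n}{2}\bigl(m + n - \tfrac{n}{2}\bigr) = \tfrac{n}{2}\bigl(m + \tfrac{n}{2}\bigr) = \tfrac{mn}{2} + \tfrac{n^2}{4} > \tfrac{mn}{2}$. Since the exponent $r(m+n-r)$ is increasing in $r$ on the relevant range $r \leq (m+n)/2$ (which holds as $n/2 \leq (m+n)/2$), the ball volume already exceeds $q^{mn/2}$ at $r = n/2$, hence the largest admissible $r$ is strictly smaller, giving $d_{RGV}(m,n) \leq n/2$.

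The main obstacle I anticipate is controlling the lower-order terms in the approximation of the sphere volume well enough that the strict inequality $\tfrac{n^2}{4}$ slack survives after accounting for the additive corrections in $\log_q \cg{r}{m}$ and in the full-rank matrix count. In practice the gap of order $n^2/4$ in the exponent is comfortably larger than these $O(r)$ or $O(m)$ correction terms, so the bound holds for all parameters of interest; I would make this precise by keeping explicit inequalities $\cg{r}{m} \leq q^{r(m-r)} \cdot C$ and $\prod_{i=0}^{r-1}(q^n-q^i) \leq q^{rn}$ with a modest constant $C$, rather than asymptotic equivalences, so that the final comparison is a clean chain of inequalities.
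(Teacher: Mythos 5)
Your argument is correct, but it takes a genuinely different route from the paper. The paper simply quotes the closed-form asymptotic expression $d_{RGV}(m,n) = \tfrac{1}{2}\bigl(m+n-\sqrt{m^2+n^2}\bigr)$ for $k=n/2$ (citing the Durandal paper), observes that $\partial d_{RGV}/\partial m = \tfrac{1}{2}\bigl(1 - m/\sqrt{m^2+n^2}\bigr) > 0$, and concludes that $d_{RGV}$ is bounded by its limit $n/2$ as $m \to \infty$. You instead go back to the combinatorial definition: you bound the rank-sphere volume by $q^{r(m+n-r)}$ (up to controlled correction factors), evaluate the exponent at $r = n/2$ to get $\tfrac{mn}{2} + \tfrac{n^2}{4} > \tfrac{mn}{2}$, and invoke monotonicity of $r \mapsto r(m+n-r)$ on $r \leq (m+n)/2$. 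Both are valid; your version is self-contained and does not rely on the quoted formula, at the cost of having to pin down the off-by-one convention in the definition of $d_{RGV}$ and to verify that the $n^2/4$ slack dominates the correction terms (it does: $\cg{r}{m} \geq q^{r(m-r)}$ exactly, and $\prod_{i=0}^{r-1}(q^n-q^i) \geq \phi(q^{-1})\,q^{rn}$ in the notation of Lemma~\ref{lem:gauss}, so the loss is an additive constant in the exponent, negligible against $n^2/4$ for $n \geq 3$). The paper's version is a two-line calculus computation once the formula is granted. Incidentally, the formula also yields the even shorter bound $d_{RGV} = \tfrac{1}{2}\bigl(m+n-\sqrt{m^2+n^2}\bigr) \leq \tfrac{1}{2}\bigl(m+n-\max(m,n)\bigr) = \tfrac{1}{2}\min(m,n) \leq n/2$, with no derivative needed.
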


\begin{proof}
When apply the asymptotic formula of $d_{RGV}$ (\cite{aragon2019durandal}, §2.4) to the case $k = n/2$, we get
$$d_{RGV}(m,n) = \frac{m+n-\sqrt{m^2+n^2}}{2}.$$
\noindent We then calculate
$$\frac{\partial d_{RGV}}{\partial m} = \frac{1}{2} \left( 1 - \frac{m}{\sqrt{m^2 + n^2}}\right)$$
which is a positive quantity so we have
$$
d_{RGV}(m,n) \leq \lim_{m \rightarrow \infty} d_{RGV}(m,n) = n/2.
$$
 \qed
\end{proof}

As a result, if we choose $r$ and $d$ to be egal to $d_{RGV}$, we get an asymptotic condition on the number of syndromes $\ell$ due to Proposition \ref{prop:DFR}. Indeed $(n-k)\ell = rd + o(1)$, which gives: $n\ell / 2 = d_{RGV}^2 + o(1)$. We then deduce an asymptotic upper bound on $\ell$ to be $r = d_{RGV}$. To the best of our knowledge, the range where $\ell \leq r$ is a hard parameter range for which the $\RSL$ problem has no known polynomial attacks.\\

The fact that we can choose $r$ and $d$ on the rank Gilbert-Varshamov bound has two major implications:
\begin{itemize}
	\item Algebraic attacks against the $\RSD$ problem are more difficult when $r$ gets closer to $d_{RGV}$.
	\item The secret parity check matrix $\Uv$ is an homogeneous of weight $d_{RGV}$ so the minimal distance of the dual of the resulting LRPC code is about $d_{RGV}$, just like a random code. It gives more confidence in the indistinguishably of the public matrix $\Hv$ ($\LRPC$ problem).
\end{itemize}

Our proposal is the only code-based cryptosystem with structural masking that has such an interesting property for the distinguishing problem.

\subsection{Reducing the value of parameter $m$ (LRPC-xMS)}
\label{subsec:reducing-m}

In this subsection we introduce a variation that further reduces the DFR at the cost of additional computations.

When choosing the parameters for our cryptosystem, we generally want to take $m$ as small as possible in order to reduce the sizes of both keys and ciphertexts. However, reducing $m$ might lead to decoding failures where $E \subsetneq S_1 \cap \dots \cap S_d$, meaning that $E' = S_1 \cap \dots \cap S_d$ has a dimension strictly greater than $r$. In this section, we show that with overwhelming probability $E'$ is of dimension at most $r+1$. If the dimension is $r+1$, it is possible to enumerate all possible subspaces of dimension $r$ of $E'$ and check for each of these subspaces if it is the target space $E$ or not. We modify slightly the cryptosystem so that the receiver can efficiently check whether a given subspace is $E$: we use a hash function (different from the one used to compute the shared secret) to add a hash of $E$ to the ciphertext. By hashing each of the enumerated subspaces it is thus possible to find which one is $E$.

The extended Rank Support Recovery Algorithm with multiple syndromes that allows to recover $E$ even when $E'$ is of dimension $r+1$, is presented in Algorithm \ref{algo:multi-RSR-hash}. The improved Key Encapsulation Mechanism LRPC-xMS taking advantage of this algorithm is informally presented in Figure \ref{fig:new_proto_improved}. Take note that Algorithm $\xRSR$ needs a hash as a parameter, therefore the hashed value $G'(E)$ has to be sent in the ciphertext. The second hash function $G'$ is a public parameter of the system.\\

\begin{algorithm}[h]\label{algo:multi-RSR-hash}
	\KwData{ $F = \langle f_1, ..., f_d\rangle$ an $\Fq$-subspace of $\Fqm$, $\Sv=(s_{ij}) \in \Fqm^{(n-k)\times \ell}$ the $\ell$ syndromes of error vectors of weight $r$ and support $E$ and a hash $\mathcal H$}
	\KwResult{A candidate for the vector space $E$}
	
	\Comment{\textbf{Part 1: } Compute the vector space $\EF$}
	Compute $S=\langle s_{11},\cdots,s_{(n-k)\ell}\rangle$
	
	\Comment{\textbf{Part 2: } Recover the vector space $E'$}
	$E' \leftarrow \bigcap_{i=1}^d f_i^{-1}S $
	
	\Comment{\textbf{Part 3: } Check candidate subspaces for dim(E') = r+1}
	\uIf{$\dim E' = r$}{\Return{$E'$}}
	\uElseIf{$\dim E' > r+1$}{\Return{$\bot$}}
	\Else{
	\For{every $E_C \subset E'$, $\dim E_C = r$}{
		\If{$G'(E_C) == \mathcal H$}{\Return{$E_C$}}
	}
	\Return{$\bot$}}
	
	\caption{Extended Rank Support Recovery (\xRSR) algorithm with multiple syndromes}
	
\end{algorithm}

\begin{figure*}[bt]
	
\resizebox{\textwidth}{!}{
\fbox{\hspace{-.0cm}%
\begin{minipage}{1.15\textwidth}
\begin{minipage}{.4225\textwidth}
	\parbox{\textwidth}{\centering
		\underline{Alice}}
\end{minipage} 
\begin{minipage}{.1125\textwidth}
	\parbox{\textwidth}{}
\end{minipage} ~
\begin{minipage}{.4225\textwidth}
	\parbox{\textwidth}{\centering
		\underline{Bob}}
\end{minipage} \\[2mm]
\begin{minipage}{.4225\textwidth}
	\parbox{\textwidth}{\centering%
		choose $F$ of dimension $d$ at random\\
		$\Uv = (\Av | \Bv) \rand F^{(n-k)\times n}$, $\Hv = (\Iv_{n-k} | \Av^{-1}\Bv)$ syst. form of $\Uv$\\[10mm]
		$\Sv = \Av\Cv$\\
		$E \leftarrow \xRSR(F, \Sv, r, G'(E))$\\
		~\\[2mm]
		\fbox{$G\left(E\right)$}%
	}%
\end{minipage} ~
\begin{minipage}{.1125\textwidth}
	\parbox{\textwidth}{\centering%
		~\\[7mm]
		$\xrightarrow{~~~~\mathbf{H}~~~~}$\\[6mm]
		$\xleftarrow{~~~~\mathbf{C}, G'(E)~~~~}$\\[12mm]
		
		\textsc{Shared Secret}%
	}
\end{minipage} ~
\begin{minipage}{.4225\textwidth}
	\parbox{\textwidth}{\centering%
		~\\[8mm]
		choose $E$ of dimension $r$ at random\\
		$\Uv \rand E^{n\times \ell}$ \\
		$\Cv = \Hv \Vv$\\[18mm]
		\fbox{$G\left(E\right)$}%
	}
\end{minipage}
\end{minipage}
}%
}
\caption{\label{fig:new_proto_improved}Informal description of our improved new Key Encapsulation Mechanism LRPC-xMS. $\mathbf{H}$ constitutes the public key.}
\end{figure*}

This algorithm fails only when $\dim(S_1 \cap \dots \cap S_d) > r+1$. In order to provide an upper bound for this probability let us first start with a quick lemma to bound Gaussian binomial coefficients: 

\begin{lemma}
\label{lem:gauss}
For all $c \le m$ we have $\cg{c}{m} \leq \frac{1}{\phi(q^{-1})} \, q^{c(m-c)}$
where $\phi$ is the Euler function given by

$$ \phi(x) = \prod_{k=1}^{\infty} (1 - x^k) \textnormal{ for } |x| < 1.$$
\end{lemma}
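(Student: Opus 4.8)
The plan is to start from the closed form for the Gaussian binomial coefficient recalled earlier, namely $\cg{c}{m} = \prod_{i=0}^{c-1} \frac{q^m-q^i}{q^c-q^i}$, and rewrite each factor so that the dominant power of $q$ is isolated. Dividing numerator and denominator of the $i$-th factor by $q^i$ turns it into $\frac{q^{m-i}-1}{q^{c-i}-1}$, and factoring out the leading terms $q^{m-i}$ and $q^{c-i}$ rewrites it as $q^{m-c}\,\frac{1-q^{-(m-i)}}{1-q^{-(c-i)}}$. Taking the product over $i=0,\dots,c-1$ collects the $c$ copies of $q^{m-c}$ into a single factor, giving
\[
\cg{c}{m} = q^{c(m-c)} \prod_{i=0}^{c-1} \frac{1-q^{-(m-i)}}{1-q^{-(c-i)}},
\]
so that the whole problem reduces to bounding the remaining product by $1/\phi(q^{-1})$.

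Next I would treat numerator and denominator separately. For the numerator, every factor $1-q^{-(m-i)}$ lies strictly in $(0,1)$ since $m-i \ge m-c+1 \ge 1$, so the product of the numerators is at most $1$. For the denominator, I would reindex by $j = c-i$, which ranges over $1,\dots,c$ as $i$ ranges over $c-1,\dots,0$, turning it into $\prod_{j=1}^{c}(1-q^{-j})$. Because each factor $1-q^{-j}$ again lies in $(0,1)$, appending the missing factors of the infinite Euler product only decreases the value, hence $\prod_{j=1}^{c}(1-q^{-j}) \ge \prod_{j=1}^{\infty}(1-q^{-j}) = \phi(q^{-1})$. Combining the two estimates yields $\prod_{i=0}^{c-1}\frac{1-q^{-(m-i)}}{1-q^{-(c-i)}} \le 1/\phi(q^{-1})$, and multiplying through by $q^{c(m-c)}$ gives the claimed bound.

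There is no genuinely hard step here: the argument is a direct manipulation of the product formula combined with the monotonicity of a product of factors in $(0,1)$. The only points demanding care are the exponent bookkeeping when factoring $q^{m-c}$ out of each of the $c$ terms (to land exactly on $q^{c(m-c)}$), and ensuring the denominator inequality points the right way — extending a product of sub-unit factors to infinitely many factors makes it \emph{smaller}, so its reciprocal is \emph{larger}, which is precisely the direction the bound requires.
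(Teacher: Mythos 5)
Your proof is correct and follows essentially the same route as the paper's: both isolate the dominant factor $q^{c(m-c)}$ and bound the residual product by extending $\prod_{j=1}^{c}(1-q^{-j})$ to the infinite Euler product $\phi(q^{-1})$. The only cosmetic difference is that you factor each term of the Gaussian coefficient individually, whereas the paper bounds the numerator product by $q^{cm}$ and the denominator product by $q^{c^2}\phi(q^{-1})$ separately before dividing.
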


\begin{proof}
We first note that:	
\begin{align*}
\prod_{i=0}^{c-1} (q^c-q^i) &= q^{c^2} \prod_{i=0}^{c-1} (1 - q^{i-c})\\
&=q^{c^2} \prod_{i=1}^{c} (1 - q^{-i})\\
&\geq q^{c^2} \prod_{i=1}^{\infty} (1 - q^{-i})\\
&\geq q^{c^2} \phi(q^{-1})
\end{align*}
We also have immediately:
$$
\prod_{i=0}^{c-1} (q^m-q^i) \leq q^{cm}
$$
As a result,
\begin{align*}
\cg{c}{m} &= \prod_{i=0}^{c-1} \frac{q^m-q^i}{q^c-q^i}\\
&\leq \frac{q^{cm}}{q^{c^2} \phi(q^{-1})}\\
&\leq \frac{1}{\phi(q^{-1})} \, q^{c(m-c)} \tag*{\qed}
\end{align*}
\end{proof}

Next, using the same assumption as for Propositions~\ref{prop:basic-DFR} and~\ref{prop:DFR} on the independence of the spaces $S_i$ with respect to the dimension of intersections, we prove the following general result for small values of $c$:

\begin{proposition} \label{prop:prob_E_larger}
  Let $c \geq 1$, $S = \EF$, $\{f_1, \dots, f_d\}$ a basis of $F$, and $S_i = f_i^{-1}S$. Then $\frac{1}{\phi(q^{-1})} \, q^{c(rd-r-c + (d-1)(rd-m))}$ is an upper bound for $\Prob(\dim(\bigcap\limits_{i=1}^d S_i) \geq r + c)$.
\end{proposition}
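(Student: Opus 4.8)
The plan is to pass to the quotient space $\Fqm/E$, apply a union bound over the possible ``extra'' intersection subspaces, and estimate the resulting Gaussian binomials using Lemma~\ref{lem:gauss} together with an elementary telescoping inequality. First I would record the two structural facts about the $S_i$. Each one contains $E$: for $e\in E$ we have $f_i e\in EF=S$, hence $E\subseteq f_i^{-1}S=S_i$. Moreover multiplication by $f_i^{-1}$ is an $\Fq$-linear bijection of $\Fqm$, so $\dim S_i=\dim S=rd$. I would then work modulo $E$ in $V:=\Fqm/E$, of dimension $M:=m-r$, where each image $\overline{S_i}$ is a subspace of dimension $D:=rd-r$ and the event $\dim\bigl(\bigcap_i S_i\bigr)\geq r+c$ is exactly $\dim\bigl(\bigcap_i \overline{S_i}\bigr)\geq c$.

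Next I would set up the union bound. Since the full intersection is contained in $\overline{S_1}$, the event forces the existence of a $c$-dimensional subspace $\overline T\subseteq\overline{S_1}$ with $\overline T\subseteq\overline{S_i}$ for all $i$. There are $\cg{c}{D}$ candidates $\overline T$ inside $\overline{S_1}$, so conditioning on $\overline{S_1}$ gives
\[
\Prob\Bigl(\dim\bigl(\textstyle\bigcap_{i=1}^d S_i\bigr)\geq r+c\Bigr)\ \leq\ \cg{c}{D}\cdot\max_{\overline T}\Prob\bigl(\overline T\subseteq\overline{S_i}\ \text{for } i=2,\dots,d\bigr).
\]
Invoking the same independence heuristic used for Propositions~\ref{prop:basic-DFR} and~\ref{prop:DFR} (the $\overline{S_i}$ behave as independent uniform $D$-dimensional subspaces of $V$), the probability that a fixed $c$-dimensional $\overline T$ lies in a single random $\overline{S_i}$ is $\cg{D-c}{M-c}/\cg{D}{M}$, obtained by counting the $D$-dimensional subspaces containing $\overline T$ against all of them. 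Hence the factor for $i=2,\dots,d$ is the $(d-1)$-th power of this ratio.

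It then remains to bound the two pieces. For the ratio I would use the product form $\cg{D}{M}=\prod_{j=1}^{D}\frac{q^{M-D+j}-1}{q^j-1}$ following from the definition, which telescopes to
\[
\frac{\cg{D-c}{M-c}}{\cg{D}{M}}=\prod_{j=D-c+1}^{D}\frac{q^j-1}{q^{M-D+j}-1}.
\]
Each of these $c$ factors is at most $q^{-(M-D)}$, because $(q^j-1)q^{M-D}\leq q^{M-D+j}-1$ is equivalent to $q^{M-D}\geq 1$, which holds since $M\geq D$ (equivalently $m\geq rd$, as $EF\subseteq\Fqm$). Thus the ratio is at most $q^{-c(M-D)}=q^{c(rd-m)}$, and its $(d-1)$-th power is $q^{c(d-1)(rd-m)}$. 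Finally Lemma~\ref{lem:gauss} gives $\cg{c}{D}\leq\frac{1}{\phi(q^{-1})}q^{c(D-c)}=\frac{1}{\phi(q^{-1})}q^{c(rd-r-c)}$. Multiplying the two estimates and substituting $M=m-r$, $D=rd-r$ yields the claimed bound $\frac{1}{\phi(q^{-1})}\,q^{c(rd-r-c+(d-1)(rd-m))}$.

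The genuinely delicate point is keeping the constant equal to exactly $1/\phi(q^{-1})$. I expect the $\phi$ factor to enter only through the single application of Lemma~\ref{lem:gauss} to the counting term $\cg{c}{D}$; the containment probability must therefore be bounded \emph{directly} via the telescoping product above, rather than by applying Lemma~\ref{lem:gauss} separately to numerator and denominator, which would introduce spurious extra constants and lose the clean exponent. Beyond that, the only non-rigorous ingredient is the independence heuristic on the $\overline{S_i}$, which is the same modeling assumption (validated by simulation) already adopted in the earlier propositions.
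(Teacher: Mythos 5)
Your proposal follows essentially the same route as the paper: pass to the quotient $\Fqm/E$, union-bound over $c$-dimensional subspaces of $S_1/E$ under the same independence heuristic, bound the containment probability by $q^{c(rd-m)}$, and apply Lemma~\ref{lem:gauss} once to the counting term. The only (minor, welcome) difference is that you replace the paper's approximation $\cg{c}{rd-r}/\cg{c}{m-r}\approx q^{c(rd-m)}$ with a rigorous telescoping inequality, and you write the containment probability in the equivalent dual form $\cg{D-c}{M-c}/\cg{D}{M}$.
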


\begin{proof}
Since $S \subset EF$ we have $E \subset S_i$ for $i = 1, \dots, d$, hence each quotient vector space $S_i / E$ is a subspace of $\Fqm / E$ and we have the following equivalence:

\begin{align*}
\dim(\bigcap\limits_{i=1}^d S_i) \geq r+c &\Leftrightarrow \exists A, \dim A = c, (\bigcap\limits_{i=1}^d S_i / E) \supset A\\
&\Leftrightarrow \exists A, \dim A = c, \text{for all } i=1..d, S_i / E \supset A
\end{align*}

%
As every $S_i / E$ behaves as being an independent subspace of $\Fqm / E$ of dimension $rd-r$ with respect to the intersection, the probability that a random $A$ of dimension $c$ is in $S_i / E$ is $$\frac{\cg{c}{rd-r}}{\cg{c}{m-r}} = \prod_{i=0}^{c-1} \frac{q^{rd-r}-q^i}{q^{m-r}-q^i} \approx q^{c(rd-m)}$$ because $c \ll rd-r < m$. By enumerating over every subspace of dimension $c$ of (say) $S_1 / E$, we obtain:

\begin{align*}
	\Prob(\dim(\bigcap\limits_{i=1}^d S_i / E) \geq c) & \leq \sum\limits_{A \subset S_1/E, \dim(A) = c} \,\prod\limits_{j = 2, \dots, d} \Prob(A \subset S_j / E)\\
	& \leq \sum\limits_{A \subset S_1/E, \dim(A) = c} q^{(d-1) \times c(rd-m)}\\
	& \leq \cg{c}{rd-r} q^{(d-1) \times c(rd-m)}\\
	& \leq \frac{1}{\phi(q^{-1})} \, q^{c(rd-r-c)} q^{(d-1) \times c(rd-m)} \qquad \textnormal{(Lemma \ref{lem:gauss})} \tag*{\qed}
\end{align*}
\end{proof}

\begin{remark} \label{remark:prob_E_plus_1}
	The upper bound from proposition \ref{prop:prob_E_larger} in the case $c=1$ can be refined by taking the cardinal of $S_1 / E$ as the number of subspaces of dimension $1$ of $S_1 / E$ instead of approximating the Gaussian binomial. This leads to:
	
	$$\Prob(\dim(\bigcap\limits_{i=1}^d S_i) \geq r + 1) \leq q^{rd-r + (d-1)(rd-m)}$$
	
	Which corresponds to the formula given subsection \ref{subsec:basic-decoding}.
\end{remark}

We can now present the following estimation on the decoding failure rate of our improved cryptosystem:

\begin{proposition}
\label{prop:DFR-extended}
For $k \geq \ell$ and for $\Uv$ and $\Vv$ random variables chosen uniformly in $F^{(n-k)\times n}$ and $E^{n\times\ell}$ respectively:
$$\frac{1}{\phi(q^{-1})} \, q^{2(rd-r-2+(d-1)(rd-m))} + (n-k)2^{rd-(n-k)\ell}$$
is an upper bound on the Decoding Failure Rate of
algorithm \ref{algo:multi-RSR-hash} $\xRSR(F, \Uv\Vv, r)$.
\end{proposition}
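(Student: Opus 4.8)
The plan is to assemble the DFR bound by combining the two independent failure modes of Algorithm \ref{algo:multi-RSR-hash}, exactly as was done for Proposition \ref{prop:DFR}, but with the second failure term sharpened to reflect the fact that the extended algorithm $\xRSR$ succeeds whenever $\dim(S_1 \cap \dots \cap S_d) \le r+1$. Recall that Algorithm \ref{algo:multi-RSR-hash} fails in precisely two disjoint-to-union situations: either Part 1 fails because $S \subsetneq EF$ (the syndrome coordinates do not span the full product space), or Part 3 fails because $E' = S_1 \cap \dots \cap S_d$ has dimension strictly greater than $r+1$, so that the correct support is not recoverable even by enumerating the $r$-dimensional subspaces of $E'$. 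A union bound over these two events gives the sum of the two terms.

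First I would bound the first failure event. Since we restrict to $\dim(EF) = rd$ (as justified in Subsection \ref{subsec:DFR}, this only increases the failure probabilities and hence yields a valid upper bound), the Corollary to the main Theorem applies directly with the choice $q = 2$: the probability that $S \neq EF$ is at most $(n-k)2^{rd-(n-k)\ell}$, which is precisely the second summand in the claimed bound. No new work is needed here beyond invoking that corollary.

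Next I would bound the second failure event, namely $\dim(\bigcap_{i=1}^d S_i) > r+1$, which is the same as $\dim(\bigcap_{i=1}^d S_i) \ge r+2$. This is exactly the quantity controlled by Proposition \ref{prop:prob_E_larger} in the case $c = 2$. Substituting $c = 2$ into the upper bound $\frac{1}{\phi(q^{-1})}\, q^{c(rd-r-c+(d-1)(rd-m))}$ gives $\frac{1}{\phi(q^{-1})}\, q^{2(rd-r-2+(d-1)(rd-m))}$, which is the first summand. Adding the two bounds via the union bound completes the argument.

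The main obstacle is not in this assembly step, which is essentially bookkeeping, but lies upstream in the two results being cited. The genuinely delicate point is ensuring the independence-type heuristic used in Proposition \ref{prop:prob_E_larger}---that each $S_i/E$ behaves as an independent random subspace of $\Fqm/E$ with respect to intersection---is the appropriate model for $c = 2$, since the enumeration over $2$-dimensional subspaces $A$ of $S_1/E$ and the product bound $\prod_{j=2}^d \Prob(A \subset S_j/E)$ rest on that assumption. I would therefore flag that the proof inherits exactly the same modelling assumption as Proposition \ref{prop:prob_E_larger} (and, for the first term, the same restriction $\dim(EF) = rd$ and the $q = 2$ hypothesis of the Theorem), so that the stated bound is an upper bound modulo those standing conventions, and then present the two-line union-bound derivation.
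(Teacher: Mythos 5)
Your proposal is correct and follows exactly the route the paper intends: a union bound over the two failure events, with the first term obtained from Proposition~\ref{prop:prob_E_larger} at $c=2$ (failure of Part~3 meaning $\dim(\bigcap_{i=1}^d S_i)\geq r+2$) and the second from the corollary of Theorem~\ref{thm:support-product} with $q=2$. Your flagging of the inherited independence heuristic and the $\dim(EF)=rd$ restriction matches the paper's standing assumptions, so nothing is missing.
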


\begin{remark}
We could have dealt with larger dimensions, i.e. $\dim(\bigcap\limits_{i=1}^d S_i) = r + c$ with $c \geq 2$. The idea would be to use the same technique used in \cite{AGHT18} for solving the $\RSD$ problem, except that we already have the knowledge of a vector space $E'$ of dimension $r+c$ that contains the error support $E$. The idea is as follows: let $\{e'_1, \dots, e'_{r+c}\}$ be a basis of $E'$. Rewriting the system $\Hv \ev^\intercal = \sv$ over $\Fq$ with the knowledge that $\ev \in E'^n$ yields a system with $n(r+c)$ unknowns and $(n-k)rd$ equations. By solving this linear system, we can recover the error vector $\ev$ and thus its support $E$. However it would be difficult to evaluate the true randomness of the system $\Hv \ev^\intercal = \sv$ in such special cases, that is why we decided not to consider such cases.
\end{remark}

\section{Dimension of the support of the product of homogeneous matrices} \label{sec:proof-multi-RSR}

In this section we prove the following theorem, which is required to prove the correctness of the multi-syndrome approach presented in the previous section.
We fix $E$ and $F$ subspaces of $\Fqm$ of dimension $r$ and $d$ respectively such that $EF$ is of dimension $rd$. Remember that we have $q = 2$.

\begin{theorem}\label{thm:support-product}
	For $n_1+n_2\leq n$ and for $\Uv$ and $\Vv$ random variables chosen uniformly in $F^{n_1\times n}$ and $E^{n\times n_2}$ (respectively), $\mathbb P(\Supp(\Uv \Vv) \neq EF) \leq n_1 q^{rd-n_1n_2}$
\end{theorem}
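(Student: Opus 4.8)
The goal is to bound the probability that the coordinates of $\Uv\Vv$ fail to generate the full product space $EF$ of dimension $rd$. I would express everything in terms of a fixed basis of $EF$: fix bases $\{e_1,\dots,e_r\}$ of $E$ and $\{f_1,\dots,f_d\}$ of $F$, so that the $rd$ products $e_a f_b$ form a basis of $EF$. Each entry of $\Uv$ is an $\Fq$-combination of the $f_b$ and each entry of $\Vv$ an $\Fq$-combination of the $e_a$; hence each entry of the product $\Uv\Vv$ is an $\Fq$-linear combination of the basis vectors $e_a f_b$, and the coefficients are explicit bilinear forms in the underlying $\Fq$-coordinates of $\Uv$ and $\Vv$. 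The event $\Supp(\Uv\Vv)\neq EF$ is exactly the event that these $n_1 n_2$ vectors in the $rd$-dimensional $\Fq$-space $EF$ fail to span it.

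\textbf{Main strategy.}
The natural approach is a union bound over hyperplanes: $\Supp(\Uv\Vv)\subsetneq EF$ holds if and only if all $n_1 n_2$ coordinate vectors lie in some hyperplane (codimension-one $\Fq$-subspace) of $EF$. Rather than bounding the number of hyperplanes directly (there are about $q^{rd}$ of them, which would be too many), I expect the cleaner route is to process the columns of $\Vv$ one at a time and track the growth of the span. Write $\Vv=(\vv_1\mid\dots\mid\vv_{n_2})$; the product is $\Uv\Vv=(\Uv\vv_1\mid\dots\mid\Uv\vv_{n_2})$, where each block $\Uv\vv_j\in\Fqm^{n_1}$ contributes up to $n_1$ new coordinates. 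The plan is to show that, conditioned on the span so far being a proper subspace $W\subsetneq EF$, each fresh column has a controlled probability of keeping the span inside $W$; summing the failure contributions over the $n_1$ rows and $n_2$ columns should yield the factor $n_1$ and an exponent that decreases by $n_1 n_2$ relative to $rd$.

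\textbf{Key technical step.}
The crux is a spanning/fullness estimate for a single matrix-vector product: for fixed $\Uv$ of support exactly $F$ and a uniformly random $\vv\in E^n$, one must control the probability that $\Uv\vv$ avoids adding a new direction outside a given subspace $W$. Here the homogeneity of $\Uv$ is essential — because the rows of $\Uv$ have entries spanning all of $F$, the map $\vv\mapsto\Uv\vv$ is ``generic enough'' that $\Uv\vv$ lands in a prescribed proper subspace of $EF$ only with probability roughly $q^{-n_1}$ per column, once we account for the $n_1$ rows. I would formalize this by fixing a target hyperplane (or the current span $W$) in $EF$ and counting the $\vv$ for which every entry $(\Uv\vv)_i$ lies in $W$; each such constraint is a nontrivial $\Fq$-linear condition on the coordinates of $\vv$ (nontrivial precisely because $\Uv$ has full support $F$), so the solution set shrinks by a factor $q$ per independent constraint.

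\textbf{Anticipated obstacle.}
The hard part will be handling the \emph{dependence} between the coordinates of $\Uv\Vv$: the entries are bilinear in shared randomness, so the $n_1n_2$ vectors are far from independent, and a naive union bound over the $\approx q^{rd}$ hyperplanes gives a useless bound. The delicate point is to argue that the linear conditions imposed by ``$(\Uv\vv_j)_i\in W$'' remain nondegenerate — that the homogeneous structure forces enough of them to be independent so that the per-step factor is genuinely $q^{-1}$ and the total exponent reaches $rd-n_1n_2$. I would expect the proof to require an inductive argument on the dimension of the running span, together with a careful case analysis (possibly invoking the assumption $q=2$ to make the counting of low-dimensional subspaces tractable), and this inductive bookkeeping — rather than any single inequality — is where the real work lies. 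The restriction $n_1+n_2\le n$ presumably guarantees enough ``fresh'' columns and rows to reach full dimension $rd$, and I would keep track of exactly where that hypothesis is consumed.
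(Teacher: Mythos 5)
There is a genuine gap, and it begins with a strategic misjudgment: you dismiss the union bound over the $\approx q^{rd}$ hyperplanes as ``useless,'' but that is exactly the paper's route, and it is not too many. The target bound $n_1 q^{rd-n_1n_2}$ literally factors as $q^{rd}$ (the number of non-zero linear forms $\phi$ on $EF$) times a per-form bound $n_1 q^{-n_1n_2}$ on $\Prob(\Supp(\Uv\Vv)\subset\ker\phi)$. The real content of the proof is establishing that per-form bound, and your replacement strategy --- processing columns of $\Vv$ sequentially and tracking the running span $W$ --- runs into a conditioning problem you do not address: the columns $\Uv\vv_1,\dots,\Uv\vv_{n_2}$ are independent only \emph{conditionally on $\Uv$}, so conditioning on the span of the first $j$ columns skews the posterior on $\Uv$ and entangles the later columns. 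The paper sidesteps this entirely by fixing $\phi$ first and conditioning only on the rank of the map $\varphi_{\Uv}:\vv\mapsto\phi(\Uv\vv)$.

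The second, more substantive error is in your key technical step. You assert that each condition $\phi\bigl((\Uv\vv)_i\bigr)=0$ is a nontrivial linear constraint on $\vv$ ``precisely because $\Uv$ has full support $F$,'' so the solution set shrinks by $q$ per constraint. This is false: the relevant object is the bilinear form $\phi_b(e,f)=\phi(ef)$ on $E\times F$, whose rank $s$ can be as low as $1$, and for such $\phi$ the constraint coming from row $\uv_i$ depends only on a single $\Fq$-coordinate of each entry of $\uv_i$; it can vanish identically, and distinct rows can impose dependent constraints, even when $\Supp(\Uv)=F$. The paper's actual work is to diagonalize $\phi_b$, show that $\rank(\varphi_{\Uv})$ equals the rank of an expanded random \emph{binary} $n_1\times sn$ matrix $\Uv^s$, bound the probability of rank deficiency of such a matrix (this is where $n_1+n_2\le n$ is consumed, and where the factor $n_1$ arises --- from summing over the possible rank deficits, not over rows as you suggest), and then compute $\Expec\bigl(2^{-n_2\Rank(\varphi_{\Uv})}\bigr)\le n_1 2^{-n_1n_2}$. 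Without some quantitative control of this degeneracy over the randomness of $\Uv$, your per-step factor of $q^{-1}$ per constraint cannot be justified, and the proposal does not close.
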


A first idea which may come to mind when trying to prove this theorem would be to use the Leftover Hash Lemma \cite{impagliazzo1989pseudo} (LHL) in order to prove that the statistical distribution of $\Uv\Vv$ is $\varepsilon$-close to the uniform statistical distribution on $EF^{n_1\times n_2}$.
However, the total number of different couples $(\Uv, \Vv)$ is equal to $\dim F^{n_1n} \dim E^{n_2n} = rd^n r^{n_2} d^{n_1}$ and the number of matrices in $EF^{n_1\times n_2}$ is $rd^{n_1n_2}$. In a usual code-based cryptography setting where $n_1 \approx n_2 \approx n/2$ and $r \approx d$, we get that $rd^n r^{n_2} d^{n_1} \ll rd^{n_1n_2}$ therefore we cannot expect to use the LHL.

At first sight, this is quite an issue, as proving the statement of our theorem without standard statistical arguments can be quite complex, or impossible. The rest of the section presents a five stage proof of the theorem (main body and 4 lemmas), using algebraic arguments. Our approach is to study the distribution of $\phi(\Uv\Vv)$ for a linear form $\phi$ on $EF$. We show that the distribution of $\phi(\Uv\Vv)$ is uniform in a subspace of $\F^{n_1\times n_2}$ whose dimension is depending on the rank of $\phi$ viewed as a tensor in $E\otimes F$ and on a simple condition on matrix $\Uv$.

\subsection{Preliminary results on binary matrices}

%

%

\begin{lemma}\label{lem:majrank}
	For a uniformly random binary matrix $\Mv$ of size $m \times n$ with $m \leq n$ and for $0 < i \leq m$, $\mathbb P(\rank(\Mv) \leq m - i) \leq 2^{i(m-n)}$.
\end{lemma}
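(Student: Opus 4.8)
The plan is to bound the probability that a uniformly random binary $m \times n$ matrix $\Mv$ has rank at most $m-i$ by counting. I would fix a target rank $\rho = m - i$ and sum over all possible row spaces, or more directly estimate the probability that the $m$ rows fail to span a space of dimension exceeding $m-i$. The cleanest route is to build the matrix row by row and track how often a new row lands in the span of the previous rows.

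\medskip

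\noindent\textbf{Row-by-row approach.} First I would reveal the rows $\rv_1, \dots, \rv_m \in \F^n$ of $\Mv$ sequentially. For $\rank(\Mv) \leq m-i$ to hold, there must be at least $i$ indices $j$ at which the new row $\rv_j$ falls into the span of the already-revealed rows $\langle \rv_1, \dots, \rv_{j-1}\rangle$. If the preceding rows span a space of dimension $t$, the conditional probability that $\rv_j$ lands in that span is $q^{t}/q^{n} = q^{t-n} \leq q^{(m-1)-n}$, since $t \leq j-1 \leq m-1$. Each such "collision" event therefore has conditional probability at most $q^{m-1-n} \leq q^{m-n}$ (using that these are small because $m \leq n$). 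I would then argue that the event $\rank(\Mv) \leq m-i$ forces at least $i$ collisions, and bound the probability of having $i$ collisions among the $m$ steps.

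\medskip

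\noindent\textbf{Assembling the bound.} The subtle point is that one cannot simply multiply $i$ independent factors of $q^{m-n}$, because the collision indices are not fixed in advance and the conditional probabilities depend on the current rank. A clean way to handle this is to choose the $i$ collision positions: there are at most $\binom{m}{i}$ choices, and for each fixed choice the product of the conditional collision probabilities is bounded by $\prod q^{t_s - n}$ where the $t_s$ are distinct attainable dimensions, which I would bound by $q^{i(m-n)}$ or refine using the exact geometric-series structure. Summing over the position choices gives roughly $\binom{m}{i} q^{i(m-n)}$. To reach the sharp stated bound $q^{i(m-n)}$ without the binomial prefactor, I expect one must be more careful: rather than a union bound over collision patterns, count directly the number of matrices of each rank, namely $\sum_{\rho \leq m-i}$ of (number of $\rho$-dimensional subspaces) $\times$ (surjections onto them), i.e. $\cg{\rho}{n}\prod_{s=0}^{\rho-1}(q^m - q^s)$, and compare the dominant $\rho = m-i$ term against $q^{mn}$. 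Using Lemma~\ref{lem:gauss}-style Gaussian-binomial estimates, the ratio telescopes to essentially $q^{i(m-n)}$.

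\medskip

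\noindent\textbf{Main obstacle.} The hard part will be squeezing out the constant and the binomial factor to land exactly on $q^{i(m-n)}$ rather than a looser $\binom{m}{i}q^{i(m-n)}$; the exact counting of low-rank matrices together with a careful telescoping of the Gaussian-binomial products is what makes the clean exponent appear, and keeping the geometric tails $\prod(1-q^{-s})$ under control is the delicate bookkeeping. I would therefore favour the direct enumeration of low-rank matrices over the sequential union bound, as it naturally produces the tight exponent, leaving only elementary inequalities on the partial products to verify.
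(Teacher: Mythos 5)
Your plan is viable but takes a longer road than necessary, and the difficulty you flag at the end --- landing exactly on $2^{i(m-n)}$ with no binomial or constant prefactor --- is real for both of your routes yet simply does not arise in the paper's argument. The paper neither counts matrices of each exact rank nor tracks any telescoping products: it observes that $\rank(\Mv)\le m-i$ exactly when the column span $\Supp(\Mv)$ is contained in some subspace $S\subseteq\F^m$ of dimension $m-i$; for a fixed such $S$ the $n$ columns are independent and each lies in $S$ with probability $2^{-i}$, so $\Prob(\Supp(\Mv)\subset S)=2^{-in}$, and a union bound over the at most $2^{im}$ choices of $S$ gives $2^{im}\cdot 2^{-in}=2^{i(m-n)}$. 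The reason no delicate bookkeeping is needed is that the bound $\begin{bmatrix}m\\ i\end{bmatrix}_2\le 2^{im}$ on the number of subspaces is deliberately crude --- it is lossy by roughly a factor $2^{i^2}$ --- and that built-in slack absorbs every constant you were worried about. Your exact-rank enumeration instead insists on a tight count of the rank-$\rho$ matrices and must then recover the constants from the partial products $\prod_s(1-2^{s-n})$; this can be pushed through, but it is painful and particularly tight for $i=1$, where the generic Gaussian-binomial estimate with the $1/\phi(q^{-1})$ constant already overshoots the target and one must use the exact value $2^m-1$ of the relevant coefficient. Two smaller remarks: in your row-by-row route the dimensions $t_s$ at the collision steps need not be distinct (two consecutive collisions leave the span unchanged), so that refinement would not rescue the union bound over collision patterns; and if you prefer the row picture, the clean fix is to pass to the left kernel --- $\rank(\Mv)\le m-i$ iff some $i$-dimensional $W\subseteq\F^m$ satisfies $\wv\Mv=\mathbf{0}$ for all $\wv\in W$, and for a fixed $W$ this has probability exactly $2^{-in}$ --- which is the paper's union bound in dual form.
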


\begin{proof}
Let $S$ be a subspace of $\{0,1\}^m$ of dimension $m-i$. The number of such possible subspaces is $\binom{m}{i}_2 \leq 2^{im}$.

For a uniformly random binary $m\times n$ matrix $\Mv$, since the $n$ columns of $\Mv$ are independent, $\mathbb P(\Supp(\Mv) \subset S) = 2^{-in}$. Then:
\begin{align*}
	\mathbb P(\Rank(\Mv) \leq m - i) &= \mathbb P(\bigcup_{S} \Supp(\Mv) \subset S)\\
	&\leq \sum_{S} \mathbb P(\Supp(\Mv) \subset S)\\
	&\leq 2^{i(m-n)} \tag*{\qed}
\end{align*}
\end{proof}

\begin{definition}
	For $s > 0$, let $R_s$ be the random variable defined as the rank of a uniformly random binary matrix of size $n_1\times ns$.
\end{definition}

\begin{lemma}\label{lem:exp-r_1}
	For $n_2 > 0$, $\Expec(2^{-n_2R_1}) \leq n_1 2^{-n_1n_2}$.
\end{lemma}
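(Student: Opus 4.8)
The plan is to expand the expectation over the possible values of the rank and then play the combinatorial scarcity of rank-deficient matrices (Lemma~\ref{lem:majrank}) against the exponential weight $2^{-n_2 R_1}$. Since $R_1$ takes values in $\{0,\dots,n_1\}$, I would first write
$$\Expec(2^{-n_2 R_1}) = \sum_{r=0}^{n_1} \mathbb{P}(R_1 = r)\,2^{-n_2 r},$$
and reindex by the rank deficiency $i = n_1 - r$ so as to factor out the target order of magnitude:
$$\Expec(2^{-n_2 R_1}) = 2^{-n_1 n_2}\sum_{i=0}^{n_1} \mathbb{P}(R_1 = n_1 - i)\,2^{n_2 i}.$$
It then suffices to show that the remaining sum is at most $n_1$.

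The key step is to bound $\mathbb{P}(R_1 = n_1 - i) \le \mathbb{P}(R_1 \le n_1 - i)$ and feed this into Lemma~\ref{lem:majrank} with its first dimension equal to $n_1$ and its second dimension equal to $n$ (legitimate since $n_1 \le n$), giving $\mathbb{P}(R_1 \le n_1 - i) \le 2^{i(n_1 - n)}$ for $1 \le i \le n_1$. Each summand with $i \ge 1$ is therefore at most $2^{i(n_1 - n)}\,2^{n_2 i} = 2^{i(n_1 + n_2 - n)}$, and here the standing hypothesis $n_1 + n_2 \le n$ of Theorem~\ref{thm:support-product} is decisive: the exponent is non-positive, so every such term is bounded by $1$. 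The $i = 0$ term is $\mathbb{P}(R_1 = n_1) \le 1$. Summing the deficiency levels $i = 1, \dots, n_1$ thus produces the factor $n_1$, the full-rank contribution (strictly below $1$) being absorbed without altering the leading constant, and we land on $n_1\,2^{-n_1 n_2}$.

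I expect the main obstacle to be conceptual rather than computational: the weight $2^{-n_2 r}$ strongly rewards small ranks, and although rank-deficient matrices are rare, they are not rare enough to be ignored outright, so the estimate lives on the exact trade-off between the deficiency penalty $2^{i(n_1-n)}$ supplied by Lemma~\ref{lem:majrank} and the reward $2^{n_2 i}$ coming from the weight. The inequality tips the right way precisely because $n_1 + n_2 \le n$, making the net contribution per deficiency level at most $1$. A reassuring cross-check comes from the identity $\Expec(2^{-n_2 R_1}) = \mathbb{P}(\Mv\Nv = 0)$ for independent uniform $\Mv \in \F^{n_1 \times n}$ and $\Nv \in \F^{n \times n_2}$ — each column of $\Nv$ lands in $\ker\Mv$ with probability $2^{-R_1}$ — which recasts the quantity as the probability that $n_2$ random vectors simultaneously fall into the $(n-R_1)$-dimensional kernel of $\Mv$, confirming both the order $2^{-n_1 n_2}$ and the mechanism above.
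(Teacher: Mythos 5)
Your proof is correct and follows essentially the same route as the paper's: expand the expectation, reindex by the rank deficiency $i = n_1 - R_1$, bound $\Prob(R_1 \le n_1 - i)$ via Lemma~\ref{lem:majrank}, and use the standing hypothesis $n_1 + n_2 \le n$ to make each deficiency level contribute at most $2^{-n_1 n_2}$. Even the slight looseness in absorbing the full-rank term (which strictly gives $(n_1+1)2^{-n_1n_2}$ before tightening) mirrors the paper's own final step, so nothing further is needed.
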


\begin{proof}
\begin{align*}
	\Expec(2^{-n_2R_1}) &= \sum_{i = 0}^{n_1} 2^{-n_2i} \mathbb P(R_1 = i)\\
	&= 2^{-n_1n_2}\Prob(R_1 = n_1) + \sum_{i = 0}^{n_1 - 1} 2^{-n_2i} \mathbb P(R_1 = i)\\
	&\leq 2^{-n_1n_2} + \sum_{i = 1}^{n_1} 2^{-n_2(n_1 -i)} \mathbb P(R_1 = n_1 - i)\\
	&\leq 2^{-n_1n_2} + \sum_{i = 1}^{n_1} 2^{-n_2(n_1 -i)} 2^{i(n_1 - n)} \qquad (\text{Lemma \ref{lem:majrank}})\\
	&\leq 2^{-n_1n_2} + \sum_{i = 1}^{n_1} 2^{i(n_2+n_1-n)-n_1n_2}\\
	&\leq 2^{-n_1n_2} + \sum_{i = 1}^{n_1} 2^{-n_1n_2} \qquad (n \geq n_1 + n_2)\\
	&\leq n_1 2^{-n_1n_2} \tag*{\qed}
\end{align*}
\end{proof}
Since $R_1 \stackrel{\mathcal L}{\leq} R_s$, we get an immediate corollary.
\begin{corollary}\label{cor:exp-r_s}
For $n_2 > 0$ and for $s > 0$, $\Expec(2^{-n_2R_s}) \leq n_1 2^{-n_1n_2}$.
\end{corollary}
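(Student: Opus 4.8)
The plan is to obtain the bound from Lemma~\ref{lem:exp-r_1} essentially for free, by combining the stochastic domination $R_1 \stackrel{\mathcal L}{\leq} R_s$ with the fact that the weighting map $x \mapsto 2^{-n_2 x}$ is decreasing. Since the corollary is labelled immediate, the whole argument reduces to a monotone coupling and requires no new computation beyond Lemma~\ref{lem:exp-r_1}.

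First I would make the domination explicit by a coupling. Draw a single uniformly random binary matrix $\Mv$ of size $n_1 \times ns$, and let $\Mv'$ be the submatrix formed by its first $n$ columns (this is the regime $ns \geq n$ in which $R_s$ is used). Because the entries are i.i.d.\ uniform, $\Mv'$ is a uniformly random $n_1 \times n$ binary matrix, so $\rank(\Mv')$ has the law of $R_1$ while $\rank(\Mv)$ has the law of $R_s$. The column space of $\Mv'$ is contained in that of $\Mv$, hence $\rank(\Mv') \leq \rank(\Mv)$ pointwise; this coupling realises $R_1 \leq R_s$ almost surely, which yields $R_1 \stackrel{\mathcal L}{\leq} R_s$.

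It then remains to transfer the domination through the expectation, and here the only point requiring care is the direction of the inequality: for $n_2 > 0$ the function $x \mapsto 2^{-n_2 x}$ is \emph{decreasing}, so applying it to the coupled inequality $R_1 \leq R_s$ reverses the order, giving $2^{-n_2 R_s} \leq 2^{-n_2 R_1}$ almost surely and therefore $\Expec(2^{-n_2 R_s}) \leq \Expec(2^{-n_2 R_1})$. Chaining this with the estimate $\Expec(2^{-n_2 R_1}) \leq n_1 2^{-n_1 n_2}$ of Lemma~\ref{lem:exp-r_1} delivers the claim. The main (and essentially only) obstacle is to keep track of the sign: enlarging the matrix can only increase the rank, and since larger rank makes $2^{-n_2 R}$ smaller, the worst case is exactly $s = 1$, which is precisely what the bound records.
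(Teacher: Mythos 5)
Your proposal is correct and follows the paper's own route: the paper likewise derives the corollary as an immediate consequence of the stochastic domination $R_1 \stackrel{\mathcal L}{\leq} R_s$ combined with Lemma~\ref{lem:exp-r_1}, merely leaving implicit the monotone coupling and the sign observation that you spell out. Your explicit submatrix coupling and the remark that $x \mapsto 2^{-n_2 x}$ is decreasing are exactly the details the paper's one-line justification relies on.
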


\subsection{Proof of Theorem \ref{thm:support-product}}

We first fix $\phi$ a non-zero linear form from $EF$ to $\Fq$ and we will study the probabibilty that $\Supp(\Uv \Vv) \subset \ker(\phi)$. For a vector $\xv = (x_1,...,x_i) \in (EF)^i$, we will note $\phi(\xv)$ the
vector $(\phi(x_1),...,\phi(x_i))$. We use the similar abuse of notation for $\phi(\Xv)$ when $\Xv$ is a matrix.

Let $\phi_{b}$ be the non-zero bilinear form
\begin{eqnarray*}
	\phi_{b}: E\times F & \to  & \F\\
	(e,f)  & \mapsto & \phi(ef).
\end{eqnarray*}
Let $s = \Rank(\phi_{b})$ be the rank of this bilinear form. Then there exists a basis $(e_1,\ldots, e_r)$ of $E$ and a basis $(f_1,\ldots, f_d)$ of $F$ in which the matrix representation of $\phi_{b}$ is
$$\begin{pmatrix}
	\Iv_s & 0\\
	0 & 0
\end{pmatrix}$$
In the product basis of $EF$ $$(e_1,\ldots, e_r) \otimes (f_1,\ldots, f_d) = (e_1f_1, ..., e_1f_d, e_2f_1,..., e_rf_1,..., e_rf_d)$$ the expression of $\phi$ is very simple. For $x = \sum_{\substack{1\leq i\leq n\\1\leq j\leq n}} x_{ij}e_if_j$ we have
$$\phi( x ) = \sum_{1\leq i\leq s} x_{ii}.$$

%
Let $\uv=(u_1,\ldots, u_n)$ be a vector of $F^n$ and consider the map
\begin{eqnarray*}
	E^n & \to & \F\\
	\vv=(v_1,\ldots v_n)^\intercal   & \mapsto &\phi(\uv\vv)=\phi(
	u_1v_1+\cdots +u_nv_n).
\end{eqnarray*}

For $i=1\ldots n$, write $u_i=\sum_{j=1}^du_{ij}f_j$ the decomposition of $u_i$
along the basis of $F$ $(f_1,\ldots, f_d)$. Similarly write
$v_i=\sum_{j=1}^rv_{ij}e_j$ the decomposition of $v_i$ along the basis of $E$ $(e_1,\ldots, e_r)$. We clearly have:
\begin{equation}
	\label{eq:phi(uv)}
	\phi(\uv\vv) = \sum_{\substack{1\leq i\leq n\\1\leq j\leq s}}u_{ij}v_{ij}.
\end{equation}

Now let $\Uv$ be an $n_1\times n$ matrix of elements in $F$. 
Define $\Uv^s$ to be the $n_1\times sn$ {\em binary} matrix obtained from $\Uv$ by
replacing every one of its rows $\uv$ by its expansion
\[
u_{11},\ldots ,u_{1s},u_{21},\ldots ,u_{2s},\ldots u_{n1},\ldots ,u_{ns}
\]
as defined in \eqref{eq:phi(uv)}. It follows that we have:

\begin{lemma}
	Let $s = \rank(\phi_b)$, $\Uv$ be an $n_1\times n$ matrix of elements in $F$ and let $\varphi_{\Uv}$ be the
	map
	\begin{eqnarray*}
		\varphi_{\Uv}: E^n & \to &\F^{n_1}\\
		\vv & \mapsto &\phi(\Uv\vv).
	\end{eqnarray*}
	The rank of the map $\varphi_{\Uv}$ is equal to the rank of the $n_1\times sn$ binary
	matrix $\Uv^s$.
\end{lemma}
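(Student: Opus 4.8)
The plan is to write $\varphi_{\Uv}$ explicitly as a matrix over $\F$ and to recognise that matrix, after deleting some zero columns, as $\Uv^s$; the two ranks then coincide by definition of the rank of a linear map. First I would fix coordinates on the domain. Since $q=2$, the space $E^n$ is an $\F$-vector space of dimension $nr$, a basis being furnished by the coordinate functions $v_{ij}$ coming from the decompositions $v_i = \sum_{j=1}^r v_{ij}e_j$ along the basis $(e_1,\dots,e_r)$ of $E$. The map $\varphi_{\Uv}$ is $\F$-linear: it is the composition of $\vv \mapsto \Uv\vv$, which is $\F$-linear from $E^n$ to $(EF)^{n_1}$ because each entry of $\Uv$ lies in $F$ and multiplication by a fixed element of $\Fqm$ is $\F$-linear, followed by the coordinatewise application of the $\F$-linear form $\phi$. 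Hence $\Rank(\varphi_{\Uv})$ is, by definition, the rank of the matrix of $\varphi_{\Uv}$ in these bases, an $n_1 \times nr$ matrix over $\F$.

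Next I would compute that matrix entrywise using \eqref{eq:phi(uv)}. Fix a row index $k \in \{1,\dots,n_1\}$, let $\uv^{(k)}$ be the $k$-th row of $\Uv$, and write $u^{(k)}_i = \sum_{j=1}^d u^{(k)}_{ij} f_j$. Applying \eqref{eq:phi(uv)} with $\uv = \uv^{(k)}$ gives
$$\big(\varphi_{\Uv}(\vv)\big)_k = \phi\big(\uv^{(k)}\vv\big) = \sum_{\substack{1 \le i \le n \\ 1 \le j \le s}} u^{(k)}_{ij}\, v_{ij}.$$
The decisive observation is that the coordinates $v_{ij}$ with $j > s$ do not appear on the right-hand side. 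Therefore, in the matrix of $\varphi_{\Uv}$, every column indexed by a pair $(i,j)$ with $j>s$ is identically zero, whereas the column indexed by $(i,j)$ with $j \le s$ equals $\big(u^{(k)}_{ij}\big)_{1 \le k \le n_1}$.

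It then remains to match this with $\Uv^s$. Ordering the non-zero columns as $(1,1),\dots,(1,s),(2,1),\dots,(n,s)$ reproduces exactly the row-expansion $u^{(k)}_{11},\dots,u^{(k)}_{1s},u^{(k)}_{21},\dots,u^{(k)}_{ns}$ that defines the $k$-th row of $\Uv^s$; that is, the $\big(k,(i,j)\big)$ entry of this block coincides with the $\big(k,(i,j)\big)$ entry of $\Uv^s$. Consequently the matrix of $\varphi_{\Uv}$ is precisely $\Uv^s$ bordered by $n(r-s)$ zero columns. Since deleting zero columns leaves the rank unchanged, I conclude $\Rank(\varphi_{\Uv}) = \Rank(\Uv^s)$.

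There is no genuine obstacle beyond careful bookkeeping: the single point that demands attention is the indexing, namely that the inner sum in \eqref{eq:phi(uv)} is truncated at $j \le s$ rather than $j \le r$. This truncation is exactly why $\Uv^s$ carries $sn$ columns rather than $rn$, and why the coordinates $v_{ij}$ with $j>s$ contribute nothing, so that the two ranks agree on the nose.
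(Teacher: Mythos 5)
Your proof is correct and follows exactly the route the paper intends: the paper states this lemma as an immediate consequence of equation~\eqref{eq:phi(uv)} and the definition of $\Uv^s$, without writing out a proof, and your argument simply makes explicit the matrix of $\varphi_{\Uv}$ in the $v_{ij}$-coordinates, identifies its non-zero columns with $\Uv^s$, and discards the zero columns indexed by $j>s$. Nothing is missing.
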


\begin{corollary}
\label{cor:law-rk-phi}
For $\Uv$ a random variable chosen uniformly in $F^{n_1\times n}$, $\Rank(\varphi_{\Uv}) \stackrel{\mathcal L}{=} R_s$ where $s$ is the rank of $\phi_b$.
\end{corollary}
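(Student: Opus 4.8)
The plan is to read off the corollary from the lemma immediately preceding it, which states the pointwise identity $\Rank(\varphi_\Uv) = \Rank(\Uv^s)$ valid for every fixed $\Uv \in F^{n_1\times n}$. Since this identity holds for each realization of $\Uv$, the random variables $\Rank(\varphi_\Uv)$ and $\Rank(\Uv^s)$ are equal as functions of $\Uv$, so they certainly share the same law. The entire problem therefore collapses to determining the distribution of $\Rank(\Uv^s)$ when $\Uv$ is drawn uniformly from $F^{n_1\times n}$, and I would reduce this in turn to the single claim that the binary matrix $\Uv^s$ is then uniformly distributed over $\F^{n_1\times sn}$.

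To prove that claim I would track how the randomness of $\Uv$ propagates through the construction $\Uv\mapsto\Uv^s$. Working in the fixed basis $(f_1,\dots,f_d)$ of $F$, expressing an entry as $u_i=\sum_{j=1}^d u_{ij}f_j$ identifies $F$ with $\F^d$ via a linear isomorphism, so a uniform element of $F$ has coordinate vector $(u_{i1},\dots,u_{id})$ uniform over $\F^d$; because the $n_1 n$ entries of $\Uv$ are sampled independently, all these coordinate vectors are jointly independent and uniform. The matrix $\Uv^s$ is built by retaining, for each entry, only the first $s$ of its $d$ binary coordinates — this truncation is legitimate because $s=\Rank(\phi_b)\leq\min(r,d)\leq d$ — and truncating a uniform vector of $\F^d$ to its first $s\leq d$ coordinates yields a uniform vector of $\F^s$, while independence across distinct entries is untouched. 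Hence the $n_1 s n$ bits filling $\Uv^s$ are i.i.d.\ uniform over $\F$, which is precisely the assertion that $\Uv^s$ is uniform over $\F^{n_1\times sn}$.

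With the claim established the conclusion follows directly: by the very definition of $R_s$ as the rank of a uniformly random binary $n_1\times ns$ matrix, $\Rank(\Uv^s)\stackrel{\mathcal L}{=}R_s$, and combining this with the lemma's pointwise identity gives $\Rank(\varphi_\Uv)\stackrel{\mathcal L}{=}R_s$. I do not expect a genuine obstacle in this corollary; the only point demanding a moment of care is the elementary observation that selecting a fixed coordinate block of a uniform vector over $\F$ again produces a uniform vector, and that this selection preserves the independence inherited from the independent sampling of the entries of $\Uv$. It is exactly this observation that transports the $F$-uniformity of $\Uv$ into the plain $\F$-uniformity of the expanded matrix $\Uv^s$.
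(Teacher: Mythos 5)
Your proof is correct and follows exactly the route the paper intends: the paper states this corollary without proof as an immediate consequence of the preceding lemma, and your argument supplies the one missing ingredient, namely that the coordinate expansion $\Uv\mapsto\Uv^s$ sends the uniform distribution on $F^{n_1\times n}$ to the uniform distribution on $\F^{n_1\times sn}$ because the basis coordinates give a linear isomorphism $F\cong\F^d$ and truncating independent uniform vectors to their first $s\leq d$ coordinates preserves uniformity and independence. Nothing further is needed.
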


%
Now that we know the probability distribution of the rank of $\varphi_{\Uv}$, we will give a probability on $\Supp(\Uv\Vv)$ depending on this rank.

\begin{lemma}
\label{lemma:support-UV}
Let $\Uv$ such that the above-defined $\varphi_{\Uv}$ is of rank $0 \leq i \leq n_1$.
Then for $\Vv$ a random variable chosen uniformly in $E^{n\times n_2}$, $\mathbb P(\Supp(\Uv \Vv) \subset \ker(\phi))  \leq q^{-i n_2}$
\end{lemma}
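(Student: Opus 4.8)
The plan is to translate the set-theoretic condition $\Supp(\Uv\Vv) \subset \ker(\phi)$ into a linear vanishing condition that decouples over the columns of $\Vv$, and then to read off the probability from the rank of $\varphi_{\Uv}$. First I would note that, since $\phi$ is $\Fq$-linear, the space $\Supp(\Uv\Vv)$ is contained in $\ker(\phi)$ if and only if $\phi$ annihilates every entry of $\Uv\Vv$, that is, if and only if $\phi(\Uv\Vv) = 0$ as a matrix in $\F^{n_1\times n_2}$ (using the coordinatewise convention for $\phi$ applied to a matrix).

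Next I would exploit the column structure of $\Vv$. Write $\vv_1,\dots,\vv_{n_2}$ for the columns of $\Vv$; because $\Vv$ is chosen uniformly in $E^{n\times n_2}$, these columns are mutually independent and each is uniform in $E^n$. The $j$-th column of $\phi(\Uv\Vv)$ is exactly $\phi(\Uv\vv_j) = \varphi_{\Uv}(\vv_j)$, so the matrix condition $\phi(\Uv\Vv)=0$ is equivalent to $\varphi_{\Uv}(\vv_j)=0$ for every $j=1,\dots,n_2$. The events $\{\varphi_{\Uv}(\vv_j)=0\}$ are therefore independent and identically distributed, which reduces the problem to a single-column estimate raised to the power $n_2$.

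For a single column I would use that $\varphi_{\Uv}\colon E^n \to \F^{n_1}$ is $\Fq$-linear, being the composition of multiplication by the fixed matrix $\Uv$ with the coordinatewise application of the $\Fq$-linear form $\phi$, and that it has rank $i$ by hypothesis. Viewing $E^n$ as an $\Fq$-vector space of dimension $rn$, the image of $\varphi_{\Uv}$ has $q^i$ elements and $\varphi_{\Uv}(\vv_j)$ is uniformly distributed over that image; hence $\mathbb P(\varphi_{\Uv}(\vv_j)=0) = q^{-i}$. Multiplying over the $n_2$ independent columns gives $\mathbb P(\Supp(\Uv\Vv) \subset \ker(\phi)) = q^{-i n_2}$, which yields the stated bound (indeed as an equality).

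There is no deep obstacle in this lemma; the only points requiring care are the bookkeeping identifying the $j$-th column of $\phi(\Uv\Vv)$ with $\varphi_{\Uv}(\vv_j)$, and the elementary fact that the fibre over $0$ of a rank-$i$ $\Fq$-linear map has relative size exactly $q^{-i}$. The genuinely structural input is the independence of the columns of $\Vv$, which is precisely what turns the single-column factor $q^{-i}$ into the exponent $i n_2$ and makes the bound decay geometrically in the number of syndromes. Combined with Corollary~\ref{cor:law-rk-phi}, which gives the distribution of $i = \Rank(\varphi_{\Uv})$ as $\Uv$ varies, this per-$\phi$ estimate will feed directly into the union bound over linear forms $\phi$ needed to finish the proof of Theorem~\ref{thm:support-product}.
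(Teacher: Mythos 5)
Your proposal is correct and follows essentially the same route as the paper: both reduce $\Supp(\Uv\Vv)\subset\ker(\phi)$ to the matrix condition $\phi(\Uv\Vv)=\mathbf{0}$, use the independence of the columns of $\Vv$ together with the fact that a linear map pushes the uniform distribution forward to the uniform distribution on its image (of size $q^i$), and conclude with $q^{-in_2}$. Your observation that the bound is in fact an equality is consistent with the paper's argument, which also computes $1/|H^{n_2}|$ exactly.
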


\begin{proof}
	Let $H = Im(\varphi_{\Uv})$
	Let $\Vv = (\vv_1,...,\vv_{n_2})$ the columns of $\Vv$.\\
	$\varphi_{\Uv}$ is a surjective homomorphism of finite abelian groups $E^n$ and $H$, so according to Theorem 8.5 in \cite{shoup2009computational}, for all $i$, $\Uv \vv_i$ is uniformly distributed. Thus because the columns of $\Vv$ are independent, $\phi(\Uv \Vv)$ is uniformly distributed in $H^{n_2}$.\\
	As a result, because $\Supp{(\Uv \Vv)} \subset \ker(\phi)$ if and only if $\phi(\Uv \Vv) = \mathbf 0$, $\mathbb P(\Supp{(\Uv \Vv)} \subset \ker(\phi)) \leq 1/|H^{n_2}| = q^{-i n_2}$.
\qed
\end{proof}






\begin{lemma}\label{lem:maj-kerphi}
	For a non-null linear form $\phi$ of $EF$, $\mathbb P(\Supp(\Uv \Vv) \subset \ker(\phi)) \leq \Expec(2^{-n_2R_1})$
\end{lemma}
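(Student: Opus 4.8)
The plan is to combine the two previous lemmas, \ref{cor:law-rk-phi} and \ref{lemma:support-UV}, by averaging over the randomness of $\Uv$. Lemma \ref{lemma:support-UV} gives a bound on the conditional probability $\mathbb P(\Supp(\Uv\Vv) \subset \ker(\phi))$ given that $\Rank(\varphi_{\Uv}) = i$, namely $q^{-in_2}$, while Corollary \ref{cor:law-rk-phi} tells us that $\Rank(\varphi_{\Uv})$ is distributed exactly as the random variable $R_s$ (where $s = \Rank(\phi_b)$) when $\Uv$ is uniform in $F^{n_1\times n}$. So the natural move is to write the total probability as an expectation over the rank.

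First I would condition on the value of $\Rank(\varphi_{\Uv})$. By the law of total probability, treating $\Uv$ and $\Vv$ as independent,
\begin{align*}
	\mathbb P(\Supp(\Uv\Vv) \subset \ker(\phi)) &= \sum_{i=0}^{n_1} \mathbb P(\Rank(\varphi_{\Uv}) = i) \cdot \mathbb P(\Supp(\Uv\Vv) \subset \ker(\phi) \mid \Rank(\varphi_{\Uv}) = i)\\
	&\leq \sum_{i=0}^{n_1} \mathbb P(\Rank(\varphi_{\Uv}) = i) \cdot q^{-i n_2},
\end{align*}
where the inequality is exactly Lemma \ref{lemma:support-UV} applied to each fixed $\Uv$ of rank $i$. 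Since $q = 2$, the right-hand side is precisely $\Expec(2^{-n_2 \Rank(\varphi_{\Uv})})$.

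Finally I would invoke Corollary \ref{cor:law-rk-phi}, which says $\Rank(\varphi_{\Uv}) \stackrel{\mathcal L}{=} R_s$, so the expectation equals $\Expec(2^{-n_2 R_s})$. To land on the stated bound $\Expec(2^{-n_2 R_1})$, I would use the stochastic domination $R_1 \stackrel{\mathcal L}{\leq} R_s$ already noted before Corollary \ref{cor:exp-r_s} (a larger matrix $\Uv^s$ has rank stochastically at least that of the $s=1$ case), which since $x \mapsto 2^{-n_2 x}$ is decreasing gives $\Expec(2^{-n_2 R_s}) \leq \Expec(2^{-n_2 R_1})$, completing the proof. I do not anticipate a serious obstacle here: the work was done in the preceding lemmas, and this statement is essentially the bookkeeping step that glues the rank distribution to the conditional collapse probability. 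The only point requiring a little care is ensuring the independence of $\Uv$ and $\Vv$ is used correctly so that conditioning on the rank of $\varphi_{\Uv}$ leaves $\Vv$ uniform, which is what licenses the application of Lemma \ref{lemma:support-UV} inside the sum.
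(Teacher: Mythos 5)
Your proof is correct and follows essentially the same route as the paper: condition on $\Rank(\varphi_{\Uv})$, apply Lemma \ref{lemma:support-UV} to each conditional term, identify the sum as $\Expec(2^{-n_2\Rank(\varphi_{\Uv})})$, pass to $R_s$ via Corollary \ref{cor:law-rk-phi}, and conclude with the stochastic domination $R_1 \stackrel{\mathcal L}{\leq} R_s$. If anything, your explicit appeal to the domination and the monotonicity of $x \mapsto 2^{-n_2x}$ is a slightly cleaner justification of the final inequality than the paper's citation of Corollary \ref{cor:exp-r_s}.
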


\begin{proof}
Let $s > 0$ be the rank of $\phi_b$.
	\begin{align*}
		\mathbb P(\Supp(\Uv \Vv) \subset \ker(\phi)) ={}& \sum_{i = 0}^{n_1} \mathbb P(\Supp(\Uv \Vv) \subset \ker(\phi) | \Rank(\varphi_{\Uv}) = i)\, \mathbb P(\Rank(\varphi_{\Uv})  = i)\\
		\leq {}&  \sum_{i = 0}^{n_1} 2^{-in_2} \mathbb P(\Rank(\varphi_{\Uv})  = i) \qquad (\textrm{Lemma \ref{lemma:support-UV}})\\
		\leq {}& \Expec(2^{-n_2\,\Rank(\varphi_{\Uv})})\\
		\leq {}& \Expec(2^{-n_2R_s}) \qquad (\textrm{Corollary \ref{cor:law-rk-phi}})\\
		\leq {}& \Expec(2^{-n_2R_1}) \qquad (\text{Corollary \ref{cor:exp-r_s}}) \tag*{\qed}
	\end{align*}
\end{proof}

\begin{proof}[of Theorem \ref{thm:support-product}]
	\begin{align*}
		\mathbb P(\Supp{(\Uv \Vv)} \neq EF) &= \mathbb P(\bigcup_{\phi \in EF^*\setminus\{0\}} \Supp{(\Uv \Vv)} \subset \ker(\phi))\\
		&\leq \sum_{\phi \in EF^*\setminus\{0\}} \mathbb P(\Supp{(\Uv \Vv)} \subset \ker(\phi))\\
		&\leq \sum_{\phi \in EF^*\setminus\{0\}} \Expec(2^{-n_2R_1}) \qquad (\textrm{Lemma \ref{lem:maj-kerphi}})\\
		&\leq 2^{rd} \, \Expec(2^{-n_2R_1})\\
		&\leq n_1 2^{rd-n_1n_2} \qquad (\textrm{Lemma \ref{lem:exp-r_1}}) \tag*{\qed}
	\end{align*}
\end{proof}

\section{Security}\label{sec:secu}

\subsection{Definitions}

We define the $\INDCPA$-security of a KEM formally via the following experiment, where $\Encap_0$ returns a valid pair $c^*,K^*$, and $\Encap_1$ returns a valid $c^*$ and a random $K^*$.\\[2mm]

\begin{minipage}{\textwidth}
	\textit{Indistinguishability under Chosen Plaintext Attack}: This notion states that an adversary should not be able to efficiently guess which key is encapsulated.
\end{minipage}

\begin{center}
\fbox{
	\begin{minipage}{0.4\textwidth}
		$\Exp_{\E,\A}^{\ind-b}(\seck)$
		\begin{enumerate}
			\item $\param \sets \Setup(1^\seck)$
			\item $(\pk,\sk) \sets \KeyGen(\param)$
			\item $(c^*,K^*) \sets \Encap_b(\pk)$
			\item $b' \sets \A(\GUESS:c^*,K^*)$
			\item \comreturn $b'$
		\end{enumerate}
		
	\end{minipage}
}
\end{center}

\medskip

\begin{definition}[$\INDCPA$ Security]
	A key encapsulation scheme KEM is $\INDCPA$-secure if for
	every PPT (probabilistic polynomial time) adversary $\advA$,
	we have that
	{\small \[\AdvINDCPA{\KEM}{\advA}:=|\Pr[\INDCPAreal^\advA\Rightarrow 1]-\Pr[\INDCPArand^\advA \Rightarrow 1]|\] is negligible.}
\end{definition}

\subsection{IND-CPA proof}

\subsubsection{Unstructured LRPC-MS}

\begin{theorem}
Under the hardness of the $\LRPC$ (Problem \ref{prob:IndLRPC}) and $\DRSL_{k,n,r,\ell}$ (Problem \ref{prob:RSL}) problems, the KEM presented in section \ref{subsec:descriptionKEM} is indistinguishable against Chosen Plaintext Attack in the Random Oracle Model.
\end{theorem}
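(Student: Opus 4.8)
The plan is to structure the proof as a standard sequence of game hops, moving from the real IND-CPA experiment to the random one, where at each step an adversary distinguishing two consecutive games is turned into a solver for one of the two underlying hard problems ($\LRPC$ and $\DRSL$). The security of the KEM hinges on two independent facts: that the public key $\Hv$ looks random (this is exactly the $\LRPC$ assumption), and that the ciphertext $\Cv = \Hv\Vv$ together with its encapsulated key $K = G(E)$ leaks nothing about $E$ (this reduces to $\DRSL$).

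\medskip

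\noindent\textbf{Game sequence.} First I would define $\mathsf{Game}_0$ to be $\INDCPAreal$, where the challenger runs $\KeyGen$ honestly and $\Encap_0$ returns the real pair $(\Cv, K = G(E))$. In $\mathsf{Game}_1$, I would replace the structured public key $\Hv = (\Iv_{n-k} \mid \Av^{-1}\Bv)$ by a uniformly random matrix $\Hv' \rand \Fqm^{(n-k)\times n}$ in systematic form. An adversary noticing this change is, by construction of Problem~\ref{prob:IndLRPC}, a distinguisher for the $\LRPC$ problem; hence $|\Pr[\mathsf{Game}_0 \Rightarrow 1] - \Pr[\mathsf{Game}_1 \Rightarrow 1]| \leq \Adv^{\LRPC}$. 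Now that $\Hv'$ is random, the ciphertext $\Cv = \Hv'\Vv$ with $\Vv \rand E^{n\times\ell}$ is precisely an instance of Rank Support Learning with $\ell$ syndromes (after suitable transpose conventions matching Problem~\ref{prob:RSL}). In $\mathsf{Game}_2$, I would replace $\Cv$ by a uniformly random matrix of the appropriate dimensions; the indistinguishability of $\mathsf{Game}_1$ and $\mathsf{Game}_2$ follows from the decisional assumption $\DRSL_{k,n,r,\ell}$, giving $|\Pr[\mathsf{Game}_1 \Rightarrow 1] - \Pr[\mathsf{Game}_2 \Rightarrow 1]| \leq \Adv^{\DRSL}$.

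\medskip

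\noindent\textbf{Decorrelating the key via the random oracle.} The final step is to argue that in $\mathsf{Game}_2$ the encapsulated key $K = G(E)$ is indistinguishable from uniform. Since $G$ is modeled as a random oracle and the ciphertext $\Cv$ is now independent of $E$, the value $G(E)$ is uniformly random and independent of the adversary's view unless the adversary queries $G$ on the actual secret subspace $E$. I would bound the probability of such a query: extracting $E$ from the (now random) transcript is as hard as the search $\RSL$ problem, so this event occurs with at most negligible probability plus a term $q_G$ (the number of oracle queries) times the guessing probability for a random subspace. Replacing $K$ by a truly uniform key then yields exactly $\INDCPArand$, and collecting the three bounds gives
\begin{equation*}
\AdvINDCPA{\KEM}{\advA} \leq \Adv^{\LRPC} + \Adv^{\DRSL_{k,n,r,\ell}} + \mathsf{negl}(\seck),
\end{equation*}
which is negligible under the two stated assumptions.

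\medskip

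\noindent\textbf{Main obstacle.} I expect the delicate point to be the second hop: matching the ciphertext distribution $\Cv = \Hv'\Vv$ exactly to the format of the $\DRSL$ problem, including the transpose conventions and the fact that $\Decap$ multiplies by $\Av$ rather than working directly with $\Hv'$. One must check that a $\DRSL$ distinguisher can genuinely simulate the challenger's view using only its own instance $(\Hv', \Sv)$ without knowing the secret support $E$ or the matrix $\Vv$; the reduction must embed the challenge syndromes as the columns (or rows) of $\Cv$ and answer the adversary's $\GUESS$ query consistently. The $\LRPC$ hop and the random-oracle argument are comparatively routine, so the technical care really concentrates on ensuring the $\DRSL$ instance is perfectly simulated.
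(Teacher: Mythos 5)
Your game sequence opens exactly as the paper's does: the first hop replaces $\Hv$ by a uniformly random systematic matrix and charges the difference to the $\LRPC$ decisional problem. Where you diverge is in how the $\RSL$ assumption enters. The paper never replaces $\Cv$ by a random matrix; it keeps the real syndromes, observes that in the ROM the adversary gains nothing unless it queries $G$ on the secret support $E$, and converts such a query into an $\RSL$ solver by forwarding a uniformly chosen oracle query to the challenger, paying a multiplicative loss in $q_G$. Your route --- a decisional hop making $\Cv$ uniform, followed by an information-theoretic argument that $G(E)$ is then fresh randomness --- is a legitimate alternative and would in principle yield a tighter bound without that $q_G$ factor.

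However, the point you flag as the ``main obstacle'' is a genuine gap, and it is not about transpose conventions. In the hop from $\mathsf{Game}_1$ to $\mathsf{Game}_2$ the reduction must hand the adversary the pair $(\Cv, K^*)$ with $K^* = G(E)$, but a $\DRSL$ distinguisher receives only $(\Hv', \Sv)$ and never learns $E$, so it cannot evaluate $G$ at $E$. The claimed inequality $|\Pr[\mathsf{Game}_1 \Rightarrow 1] - \Pr[\mathsf{Game}_2 \Rightarrow 1]| \leq \Adv^{\DRSL}$ therefore does not follow as written. To repair it you must either (i) lazily program $G$, set $K^*$ to a fresh random string, argue the simulation is perfect up to the event that the adversary queries $G$ on $E$, and then bound that event across the hop --- e.g.\ by a second $\DRSL$ distinguisher that outputs $1$ iff some oracle query $E'$ admits a solution $\Vv \in E'^{n\times\ell}$ to $\Cv = \Hv'\Vv$ (an efficient linear-algebra check), combined with the observation that the event has negligible probability once $\Cv$ is uniform --- or (ii) fall back on the paper's extraction argument, which copes with the unknown $E$ by guessing which oracle query contains it at the cost of the $q_G$ factor. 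Either fix works; as written, your middle hop silently assumes the reduction knows $E$.
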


\begin{proof}
We are going to proceed in a sequence of games. The simulator first starts from the real scheme. First we replace the public key matrix by a random matrix, and then we use the ROM to solve Rank Support Learning.\\

We start from the normal game $G_0$:
We generate the public key $\Hv$ honestly, as well as $E$, and $\Cv$.
\begin{itemize}
	\item 
	In game $G_1$, we now replace $\Hv$ by a random matrix, the rest is identical to the previous game. From an adversary point of view, the only difference is the distribution of $\Hv$, which is either generated at random, or the systematic form of a low weight parity matrix. This is exactly the \emph{LRPC codes decisional} problem, hence \[\Adv_{\mA}^{G_0} \leq \Adv_{\mA}^{G_1} + \Adv_{\mA}^{\LRPC}.\]
	
	\item In game $G_2$, we now proceed as earlier except we receive $\Hv, \Cv$ from a Rank Support Learning challenger. After sending $\Cv$ to the adversary, we monitor the adversary queries to the Random Oracle, and pick a random one that we forward as our simulator answer to the $\DRSL_{k,n,r,\ell}$ problem. Either the adversary was able to predict the random oracle output, or with probably $1/q_G$, we picked the query associated with the support $E$ (by $q_G$ we denote the number of queries to the random oracle $G$), hence \[\Adv_{\mA}^{G_1} \leq 2^{-\lambda} + 1/q_G \cdot \Adv_{\mA}^{\DRSL}\] which leads to the conclusion.
	
\end{itemize}
\end{proof}

\subsubsection{Unstructured LRPC-MS with extended decoding (from Section \ref{subsec:reducing-m})}

Compared to the previous scheme, an attacker of the improved version of LRPC-MS knows a hash $G'(E)$ of the error support $E$.

In the random oracle model, $G'(E)$ is a value indistinguishable from random that gives no information on the shared secret $E$ so the IND-CPA proof is unchanged.

In practice, the knowledge of $G'(E)$ gives the attacker a way to quickly verify if a guessed subspace $E'$ is the correct error support by computing $G'(E')$. However in the general $\RSL$ problem if an attacker guesses correctly the support $E'$ he also gets a way to quickly verify its guess $E'$ by checking the existence of a solution to the equation $\Cv = \Hv\Vv$ of unknown $\Vv \in E'^{n\times\ell}$, even without knowing the hash $G'(E)$. Hence knowing the hash $G'(E)$ can only give at most a theoretical gain of the cost of solving a linear system divided by the cost of computing a hash function.



Now, the best attacks on $\DRSL$ do not enumerate all possible candidates $E'$ with a given dimension $r$ (there would be too many of them) but rather try to guess a space $E$ of greater dimension which contains $E'$, and not $E'$ directly so that in practice knowing $G'(E)$ does not help for best known attacks. Hence in practice giving the hash $G'(E)$ does not alter the security of the scheme.

\subsubsection{Ideal LRPC-MS}

For the ideal version of our scheme, the security proof is exactly the same except that ideal versions of hard problems appear. The IND-CPA property follows immediately.

\begin{theorem}
Under the hardness of problems $\ILRPC$ and $\DIRSL_{k,n,r,\ell}$, the KEM presented in section \ref{subsec:descriptionKEM} is indistinguishable against Chosen Plaintext Attack in the Random Oracle Model.
\end{theorem}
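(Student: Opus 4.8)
The plan is to reproduce, \emph{mutatis mutandis}, the two-step sequence of games used above for the unstructured LRPC-MS scheme, substituting throughout the ideal variants of the two decisional assumptions. Starting from the real IND-CPA game $G_0$, in which $\hv = \xv^{-1}\yv \bmod P$ is generated honestly from $(\xv,\yv)$ of common support $F$ of dimension $d$ and the ciphertext $(\cv_1,\dots,\cv_\ell)$ is built from error vectors of common support $E$ of dimension $r$, I would first move to a game $G_1$ in which the public vector $\hv$ is replaced by a uniformly random element of $\Fqm^k$ (equivalently, the ideal code is replaced by a uniformly random ideal code). The only quantity whose distribution changes between $G_0$ and $G_1$ is $\hv$, and distinguishing the two distributions is exactly the $\ILRPC$ problem (Problem \ref{prob:IndILRPC}); hence $\Adv_{\mA}^{G_0} \leq \Adv_{\mA}^{G_1} + \Adv_{\mA}^{\ILRPC}$.

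Next, in a game $G_2$, I would obtain both the now-uniform public key $\hv$ and the ciphertext $(\cv_1,\dots,\cv_\ell)$ directly from a $\DIRSL$ challenger. The key observation justifying this step is that, once $\hv$ is uniform, the pair $\bigl(\hv,(\cv_1,\dots,\cv_\ell)\bigr)$ is distributed exactly as an instance of $\DIRSL$ for the $[2k,k]_{q^m}$ ideal code defined by $\hv$ and $P$: each $\cv_i = \ev_{2i-1} + \ev_{2i}\hv \bmod P$ is precisely an ideal syndrome of an error whose two halves share the common support $E$. The simulator then monitors the adversary's queries to the random oracle $G$ and forwards, as its $\DIRSL$ answer, the subspace attached to one query chosen uniformly among the $q_G$ made. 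Either the adversary predicted $G(E)$ without ever querying $E$ (probability at most $2^{-\seck}$), or the simulator picked the query on $E$ (probability at least $1/q_G$), yielding $\Adv_{\mA}^{G_1} \leq 2^{-\seck} + (1/q_G)\,\Adv_{\mA}^{\DIRSL}$. Chaining the two inequalities bounds $\Adv_{\mA}^{G_0}$ by a sum of negligible terms, which is the claim.

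The only content beyond a routine translation of the unstructured argument will be to check cleanly that the reduction of game $G_2$ produces a genuine $\DIRSL$ instance carrying exactly the scheme's distribution: namely that sampling $(\ev_1,\dots,\ev_{2\ell})$ with $\Supp(\ev_1,\dots,\ev_{2\ell}) = E$ and forming the $\cv_i$ as above matches the ideal-syndrome distribution assumed by the challenger, and that the parameters (code type $[2k,k]_{q^m}$, dual weight $d$, error weight $r$, number of syndromes $\ell$) line up with $\DIRSL_{k,n,r,\ell}$. Since the IND-CPA experiment provides no decapsulation oracle, decoding failures and the DFR analysis of Section \ref{subsec:DFR} play no role here, so no correctness argument is required. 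Everything else---the two-game structure and the random-oracle extraction of $E$---is identical to the proof of the unstructured case, which is why the statement follows immediately.
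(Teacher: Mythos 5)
Your proposal is correct and follows exactly the route the paper takes: the paper's own ``proof'' of this theorem is the one-line remark that the argument is identical to the unstructured case with $\ILRPC$ and $\DIRSL$ substituted for $\LRPC$ and $\DRSL$, and your two-game sequence is precisely that substitution spelled out. The only thing you add is the (welcome) explicit check that the honestly generated $(\hv,(\cv_1,\dots,\cv_\ell))$ matches the $\DIRSL$ instance distribution, which the paper leaves implicit.
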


The maximal value of $\ell$ for which $\DIRSL_{k,n,r,\ell}$ is hard is way lower than its non-ideal counterpart. Indeed, a single ideal syndrome can be expanded in $k$ traditional syndromes by performing ideal rotations. That is why the value of $\ell$ is lower in the parameter sets for the ideal version.

\subsubsection{IND-CCA2}
It would be possible to get an IND-CCA2 KEM by applying the same strategies than ROLLO-II~\cite{ABDGHRTZABBBO19}, transforming our construction into an IND-CPA PKE and then applying the HHK framework to get an IND-CCA2 KEM.

\subsection{Known attacks}

As an $\RSL$ challenge is an $\RSD$ challenge with multiple syndromes, it is possible to try to solve $\RSL$ in two ways, either take only one syndrome to build an $\RSD$ challenge, and attack $\RSD$, or attack $\RSL$ with all the information in the challenge. In order to define the parameters sets for our schemes we therefore have to consider the bests attacks against both $\RSD$ and $\RSL$. At last we recall a specific attack against the $\LRPC$ problem.

\smallskip

There are two main types of attacks for solving the generic $\RSD$ problem: combinatorial attacks and algebraic attacks. For cryptographic parameters the best attacks are usually the recent algebraic attacks, but it may also depend on parameters, sometimes combinatorial attacks can be better.

\paragraph{Combinatorial attacks against $\RSD$.}

The best combinatorial attacks for solving the $\RSD$ problem on a random $[n,k]$ code over $\Fqm$ for a rank weight $d$ as described in \cite{AGHT18_sv} have complexity (for $\omega$ the linear algebra exponent):

\begin{equation}
 \min((n-k)^{\omega} m^{\omega} q^{(d-1)(k+1)},(km)^\omega
        q^{d\lceil\frac{km}{n}\rceil - m})
\end{equation}

The first term of the $\min$ typically corresponds to the case where $m \ge n$, the second term corresponds to the case where $m \le n$, but still it can happen that this term is better than the first one, when $m \ge n$ but close to $n$.  A detailed description of the complexity of the second term is given in \cite{AGHT18_sv}.

\paragraph{Algebraic attacks against $\RSD$.}

The general idea of algebraic attacks is to rewrite an $\RSD$ instance as a system of multivariate polynomial equations and to find a solution to this system.

For a long time, algebraic attacks were less efficient than combinatorial ones. Recent results improved the understanding of these attacks. The best algebraic attacks against $\RSD$ can be found in~\cite{bardet2020improvements} and have complexity (for $\omega$ the linear algebra exponent):
\begin{equation}	
q^{ar} m \binom{n-k-1}{r} \binom{n-a}{r}^{\omega-1}
\end{equation}
operations in $\Fq$. $a$ is defined as the smallest integer such that the condition $m \binom{n-k-1}{r}~\geq~\binom{n-a}{r} - 1$ is fulfilled.

\paragraph{On the security of the $\RSL$ problem.}

The difficulty of solving an instance of the $\RSL_{n, k, r, N}$ problem depends on the number $N$ of samples. Clearly, for $N=1$, the $\RSL$ problem is exactly the $\RSD$ problem with parameters $(n, k, r)$, which is probabilistically reduced to the NP-hard syndrome decoding problem in the Hamming metric in \cite{GZ14}. When $N \geqslant nr$, the $\RSL$ problem is reduced to linear algebra, as stated in \cite{GHPT17a_sv} where this problem was first introduced.

This raises the question of the security of the $\RSL$ problem in the case $1 < N < nr$. In \cite{GHPT17a_sv} the authors relate this problem to the one of finding a codeword of rank $r$ in a code of same length and dimension containing $q^N$ words of this weight, and conjecture that the complexity of finding such a codeword gets reduced by at most a factor $q^N$ compared to the case $N=1$. They also observe that in practice, the complexity gain seems lower, likely due to the fact that said codewords are deeply correlated.

There have been recent improvements on the complexity of the $\RSL$ problem. In \cite{DT18b} the authors show that the condition $N \leqslant kr$ should be met in order to avoid a subexponential attack. We chose our parameters to fulfill this condition.

In \cite{BB21}, the authors propose to solve the $\RSL$ problem in the case $N \leq kr$ using an algebraic approach.
Our parameters -- in particular the number $\ell$ of syndromes -- are chosen so as to resist these recent algebraic attacks.



\paragraph{On the security of the $\LRPC$ problem.}

Given $\Hv \in \Fqm^{(n-k)\times k}$ such that $(\Iv_{n-k}|\Hv)$ is the
parity-check matrix of a code $\C$, the problem of distinguishing LRPC
codes is to decide whether $\C$ is a random code or an LRPC code.

The best known attack against this problem for almost ten years (\cite{GRSZ14}) consists 
in using the underlying homogeneous structure of the LRPC code
to find a codeword of weight $d$ in a $[n-\lfloor\frac{n-k}{d}\rfloor, n-k-\lfloor\frac{n-k}{d}\rfloor]_{q^m}$ subcode
$\C'$ of the dual code $\C^\perp$ generated by $(\Iv_{n-k}|\Hv)$ rather than a codeword of weight $d$ in the 
$\C^\perp$ $[n,n-k]$ code. Then one can consider the previously described algebraic or combinatorial attacks for this slightly smaller code (but for the same weight $d$).

\section{Parameters and performance}

\subsection{Parameters}

\paragraph{Choice of parameters.} In Section \ref{sec:secu}, the security of the protocol is reduced to the
$\LRPC$ and $\DRSL$ problems (or their ideal variants). The
best known attacks on these problems are thus used to define our parameters.
We also chose our parameters in order to have the Decoding Failure Rate (DFR) below or very close to $2^{-\lambda}$, where $\lambda$ is the security parameter, using Proposition \ref{prop:DFR} (or \ref{prop:DFR-extended} when using the extended algorithm). We only considered parameters with $k \geq \ell$ as required by these propositions.
We do not suggest parameters for LRPC-xMS-192 because the only possible prime value for $m$ below 151 is 149, which makes the gain too small to be considered.

\paragraph{Size of parameters.} One may use seeds to represent the random data in order to decrease
the keysize. We use the NIST seed expander with 40 bytes long seeds.

The public key $\mathsf{pk}$ is composed of a matrix of size $(n-k) \times n$ in a systematic form, so its size is $\Ceiling{\frac{k(n-k)m}{8}}$ bytes. The size is reduced to $\Ceiling{\frac{(n-k)m}{8}}$ bytes in the ideal case.
The secret key $\mathsf{sk}$ is composed of two random matrices that can be generated from a seed, so its size is 40 bytes.
The ciphertext $\mathsf{ct}$ is composed of a matrix of size $(n-k) \times \ell$, so its size is $\Ceiling{\frac{(n-k)\ell m}{8}}$ bytes. When using the extended $\RSR$ algorithm (denoted $\xRSR$), $64$ bytes are added to transmit a hash of $E$.
The shared secret $\mathsf{ss}$ is composed of $K = G(E)$, so its size is 64 bytes.

Parameters are given in Table~\ref{tab:params LRPC-MS}. The "structure" column indicates whether this parameter uses unstructured (random) matrices or ideal ones, and the "decoding" column indicates which decoding algorithm is used ($\RSR$ or $\xRSR$). The number indicated in the "DFR" column is actually $-\log_2$(DFR).

\begin{table}[!ht]
	\centering
	\begin{tabular}{|c|c|c|c|c|c|c|c|c|c|c|c|c|c|}
		\hline
		Instance & Decoding & $q$ & $n$ & $k$ & $m$ & $r$ & $d$ & $\ell$ & Security & DFR & $\pk$ size & $\ct$ size & $\pk + \ct$\\
		\hline
		LRPC-MS-128 & $\RSR$ & 2 & 34 & 17 & 113 & 9 & 10 & 13 & 128 & 126 & 4,083 & 3,122 & 7,205\\ \hline
		LRPC-xMS-128 & $\xRSR$ & 2 & 34 & 17 & 107 & 9 & 10 & 13 & 128 & 127 & 3,866 & 3,020 & 6,886\\ \hline
		\hline
		LRPC-MS-192 & $\RSR$ & 2 & 42 & 21 & 151 & 11 & 11 & 15 & 192 & 190 & 8,324 & 5,946 & 14,270\\ \hline
	\end{tabular}
	\caption{\label{tab:params LRPC-MS}Parameters for our unstructured LRPC-MS cryptosystem. The security is expressed in bits and sizes are expressed in bytes.}
        \vspace{-1em}
\end{table}

\begin{table}[!ht]
	\centering
	\begin{tabular}{|c|c|c|c|c|c|c|c|c|c|c|c|c|c|}
		\hline
		Instance & Decoding & $q$ & $n$ & $k$ & $m$ & $r$ & $d$ & $\ell$ & Security & DFR & $\pk$ size & $\ct$ size & $\pk + \ct$\\
		\hline
		ILRPC-MS-128 & $\RSR$ & 2 & 94 & 47 & 83 & 7 & 8 & 4 & 128 & 126 & 488 & 1,951 & 2,439\\ \hline
		ILRPC-xMS-128 & $\xRSR$ & 2 & 94 & 47 & 73 & 7 & 8 & 4 & 128 & 126 & 429 & 1,780 & 2,209\\ \hline
		\hline
		ILRPC-MS-192 & $\RSR$ & 2 & 178 & 89 & 109 & 9 & 8 & 3 & 192 & 189 & 1,213 & 3,638 & 4,851\\ \hline
		ILRPC-xMS-192 & $\xRSR$ & 2 & 178 & 89 & 97 & 9 & 8 & 3 & 192 & 189 & 1,080 & 3,238 & 4,318\\ \hline
	\end{tabular}
	\caption{\label{tab:params ILRPC-MS}Parameters for our ideal LRPC-MS cryptosystem. The security is expressed in bits and sizes are expressed in bytes.}
        \vspace{-2em}
\end{table}

\subsubsection{Comparison with other unstructured cryptosystems}
We compare our cryptosystem to other structured and unstructured proposals. Our comparison metric is the usual TLS-oriented communication size (public key + ciphertext).\\

For Loong.CCAKEM \cite{wang2019loong}, we consider only the third set of parameters since the other sets of parameters have an error weight below 6 and thus are vulnerable to algebraic attacks. For Loidreau cryptosystem, we consider the parameters presented in the conclusion of \cite{pham2021etude} which take into account the recent improvements on algebraic attacks. For both cryptosystems mentioned in this paragraph, parameters were not available (N/A) for 192 bits of security.

\begin{table*}[h!]
	\centering
	\begin{tabular}{|l|c|c|}  
		\hline
		Instance & 128 bits & 192 bits\\
		\hline
		LRPC-xMS & 6,886 & N/A\\
		\hline
		LRPC-MS & 7,205 & 14,270\\
		\hline
		Loong.CCAKEM-III & 18,522 & N/A\\
		\hline
		FrodoKEM &  19,336  & 31,376 \\
		\hline
		Loidreau cryptosystem &  36,300  & N/A\\
		\hline
		Classic McEliece & 261,248 & 524,348\\
		\hline
	\end{tabular}
	\caption{\label{tab:comparison-unstructured}Comparison of sizes of unstructured post-quantum KEMs. The sizes represent the sum of public key and ciphertext expressed in bytes.
	}
        \vspace{-2em}
\end{table*}\begin{table*}[h!]
	\centering
	\begin{tabular}{|l|c|c|}  
		\hline
		Instance & 128 bits & 192 bits\\
		\hline
		ILRPC-xMS & 2,209 & 4,318\\
		\hline
		ILRPC-MS & 2,439 & 4,851\\
		\hline
		BIKE & 3,113 & 6,197\\
		\hline
		HQC &  6,730 & 13,548\\
		\hline
	\end{tabular}
	\caption{\label{tab:comparison-structured}Comparison of sizes of structured code-based KEMs. The sizes represent the sum of public key and ciphertext expressed in bytes.
	}
        \vspace{-2em}
\end{table*}

\subsection{Performance}

This section provides indicative performance measurements of an implementation of some of the LRPC-MS cryptosystem parameters. Benchmarks were realized on an Intel\textsuperscript{\textregistered{}} Core\texttrademark{} i7-11850H CPU by averaging $1000$ executions. 

\begin{table}[h!]
\centering
\begin{tabular}{| l || c | c | c |}
\hline
	Instance  & KeyGen & Encap & Decap \\ \hline\hline
	LRPC-MS-128   & 383 & 137 & 3,195 \\ \hline
	LRPC-xMS-128   & 681 & 200 & 4,970 \\ \hline
	ILRPC-MS-128   & 214 & 107 & 1,213 \\ \hline
\end{tabular}
\caption{\label{tab:perfs_ref_cycles}Performances of our LRPC-MS cryptosystems in thousands of CPU cycles.}
\vspace{-2em}
\end{table}

As for other code-based schemes, the decapsulation algorithm has a higher computational cost than key generation and encapsulation. Note however, that  
our implementation does not yet benefit from the techniques of \cite{chou2022constant}. These techniques improved the decapsulation performance by a factor 15 (for 128 bits of security) with respect to the existing (and simpler to adapt) implementations we used as a basis for our benchmarking.

\section{Application of the approach to other LRPC code-based cryptosystems}

The multi-syndrome approach could be applied to another rank-based KEM called \textbf{Ouroboros}.

\paragraph{Background on Ouroboros}

Ouroboros is a rank-based KEM that was first introduced in \cite{DGZ17}. It was renamed ROLLO-III for the second round of the NIST standaradization project \cite{ABDGHRTZABBBO19}.
Ouroboros and ROLLO-III use ideal LRPC codes and the RSR algorithm.
But unlike ROLLO-I and ROLLO-II, the security of this KEM does not depend on the indistinguishability of ideal LRPC codes but only on the underlying decoding problem.
This weaker security assumption comes at the price of a keysize slightly larger and a ciphertext size two times larger than for ROLLO-I.

\paragraph{Informal description of Ouroboros with multiple syndromes}

Sending multiple syndromes in the ciphertext opens the way for an efficient Ouroboros-like scheme without an ideal structure. The security of this KEM would thus only rely on the hardness of the (unstructured) $\DRSL$ problem. The hardness of the $\LRPC$ distinguishing problem does need not be assumed. We give an informal description of such a scheme in Figure \ref{fig:ouroboros}.\\

\begin{figure*}[bt]
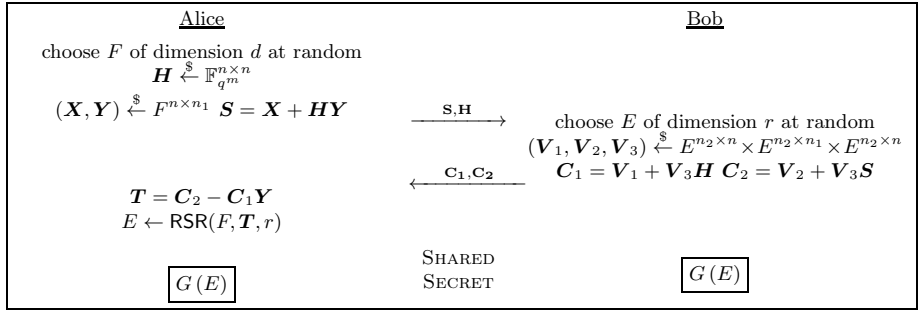

	
\resizebox{\textwidth}{!}{
\fbox{\hspace{-.0cm}%
\begin{minipage}{1.15\textwidth}
\begin{minipage}{.4225\textwidth}
	\parbox{\textwidth}{\centering
		\underline{Alice}}
\end{minipage} 
\begin{minipage}{.1125\textwidth}
	\parbox{\textwidth}{}
\end{minipage} ~
\begin{minipage}{.4225\textwidth}
	\parbox{\textwidth}{\centering
		\underline{Bob}}
\end{minipage} \\[2mm]
\begin{minipage}{.4225\textwidth}
	\parbox{\textwidth}{\centering%
		choose $F$ of dimension $d$ at random\\
		$\Hv \rand \Fqm^{n\times n}$\\
		$(\Xv, \Yv) \rand F^{n\times n_1}$
		$\Sv = \Xv + \Hv\Yv$\\[10mm]
		$\Tv = \Cv_2 - \Cv_1\Yv$\\
		$E \leftarrow \RSR(F, \Tv, r)$\\
		~\\[2mm]
		\fbox{$G\left(E\right)$}%
	}%
\end{minipage} ~
\begin{minipage}{.1125\textwidth}
	\parbox{\textwidth}{\centering%
		~\\[7mm]
		$\xrightarrow{~~~~\mathbf{S, H}~~~~}$\\[6mm]
		$\xleftarrow{~~~~\mathbf{C_1, C_2}~~~~}$\\[8mm]
		
		\textsc{Shared Secret}%
	}
\end{minipage} ~
\begin{minipage}{.4225\textwidth}
	\parbox{\textwidth}{\centering%
		~\\[8mm]
		choose $E$ of dimension $r$ at random\\
		$(\Vv_1, \Vv_2, \Vv_3) \rand E^{n_2\times n} \times E^{n_2\times n_1} \times E^{n_2\times n}$ \\
		$\Cv_1 = \Vv_1 + \Vv_3\Hv$
		$\Cv_2 = \Vv_2 + \Vv_3\Sv$\\[12mm]
		\fbox{$G\left(E\right)$}%
	}
\end{minipage}
\end{minipage}
}%
}
\caption{\label{fig:ouroboros}Informal description of Ouroboros with multiple syndromes.}
\end{figure*}

\paragraph{Expected performance}

The objective of this section is to give a general idea of the generalization but it is not the central argument of this article. Therefore we do not give precise parameters nor a complete description of this new key encapsulation mechanism. One can expect that the sum of public key and ciphertext sizes would be around 18 kB. This scheme can thus be a relevant alternative to the one presented in Section \ref{subsec:descriptionKEM}, if one wants to remove the hypothesis on the hardness of the LRPC distinguishing problem.

\section{Conclusion and future work}

We provided a proof that, using multiple syndromes on rank-metric key encapsulation mechanisms, it is possible to obtain unexpectedly low decoding failure rates with efficient parameters. As a result, it is possible to obtain KEMs with small ciphertext and public key sizes even without ideal structure. We provide an IND-CPA proof for our scheme, whose security relies on the hardness of the $\DRSL$ and $\LRPC$ distinguishing problems. We give a quick description of the application of our approach to another rank-based cryptosystem that does not assume the hardness of the $\LRPC$ distinguishing problem, and only relies on $\DRSL$. A possible future work could be to provide a state of the art implementation of this scheme both in software and hardware.

\bibliographystyle{splncs04}
\bibliography{codecrypto,biblio}

\end{document}